\newcommand{\cw}{{\mathtt{cw}}}
\newcommand{\tw}{{\mathtt{tw}}}
\newcommand{\nd}{{\mathtt{nd}}}
\newcommand{\tc}{{\mathtt{tc}}}
\newcommand{\mw}{{\mathtt{mw}}}
\newcommand{\fvs}{{\mathtt{fvs}}}
\newcommand{\fes}{{\mathtt{fes}}}
\newcommand{\ml}{{\mathtt{ml}}}
\newcommand{\mcs}{\omega}
\newcommand{\bw}{{\mathtt{bw}}}
\newcommand{\vc}{{\mathtt{vc}}}
\newcommand{\dist}{{\rm dist}}
\newcommand*\linenomathpatchAMS[1]{%
  \expandafter\pretocmd\csname #1\endcsname {\linenomathAMS}{}{}%
  \expandafter\pretocmd\csname #1*\endcsname{\linenomathAMS}{}{}%
  \expandafter\apptocmd\csname end#1\endcsname {\endlinenomath}{}{}%
  \expandafter\apptocmd\csname end#1*\endcsname{\endlinenomath}{}{}%
}
  \let\linenomathAMS\linenomathWithnumbers
  \patchcmd\linenomathAMS{\advance\postdisplaypenalty\linenopenalty}{}{}{}
  \let\linenomathAMS\linenomathNonumbers
\begin{document}
\title{Computing $L(p,1)$-Labeling  
with Combined Parameters\thanks{This work is partially supported by JSPS KAKENHI Grant Numbers JP17K19960, JP17H01698, JP19K21537.}}
%
%
\author{Tesshu Hanaka\inst{1}\orcidID{0000-0001-6943-856X} \and
Kazuma Kawai\inst{2} \and
\\Hirotaka Ono\inst{2}\orcidID{0000-0003-0845-3947}}
\authorrunning{T. Hanaka et al.}
%
\institute{Chuo University, Tokyo 112-8551, Japan \\
\email{hanaka.91t@g.chuo-u.ac.jp}
\and
Nagoya University, Nagoya 464-8601, Japan\\
\email{\{kawai.kazuma@g.mbox.nagoya-u.ac.jp, ono@i.nagoya-u.ac.jp\}}}
\maketitle              
\begin{abstract}
Given a graph, an $L(p,1)$-labeling of the graph is an assignment $f$ from the vertex set to the set of nonnegative integers such that for any pair of vertices $(u,v),|f (u) - f (v)| \ge p$ if $u$ and $v$ are adjacent, and $f(u) \neq f(v)$ if $u$ and $v$ are at distance $2$. The \textsc{$L(p,1)$-labeling} problem is to minimize the span of $f$ (i.e.,$\max_{u\in V}(f(u)) - \min_{u\in V}(f(u))+1$). 
It is known to be NP-hard even for graphs of maximum degree $3$ or graphs with tree-width 2, whereas it is fixed-parameter tractable with respect to vertex cover number. Since vertex cover number is a kind of the strongest parameter, there is a large gap between tractability and intractability from the viewpoint of parameterization.
To fill up the gap, in this paper, we propose new fixed-parameter algorithms for \textsc{$L(p,1)$-Labeling} by the twin cover number plus the maximum clique size and by the tree-width plus the maximum degree. These algorithms reduce the gap 
in terms of several combinations of parameters. 

\keywords{Distance Constrained Labeling \and $L(2,1)$-labeling \and Fixed Parameter Algorithm \and Treewidth \and Twin Cover.}
\end{abstract}
%

\noindent



\section{Introduction}

Let $G$ be an undirected graph, and $p$ and $q$ be constant positive integers. An $L(p, q)$-labeling of a graph $G$ is an assignment 
$f$ from the vertex set $V(G)$ to the set of nonnegative integers such that $|f (x)- f(y)| \ge p$ if $x$ and $y$ are adjacent and $|f(x)- f (y)| \ge  q$ if $x$ and $y$ are at distance $2$,  for all $x$ and $y$ in $V (G)$. 
We call the former \emph{distance-$1$ condition} and the latter \emph{distance-$2$ condition}.  A $k$-$L(p, q)$-labeling is an $L(p, q)$-labeling  $f: V (G)\to \{0, \ldots , k\}$, where the labels start from 0 for conventional reasons.
The $k$-$L(p, q)$-\textsc{Labeling} problem determines whether given $G$ has a 
$k$-$L(p, q)$-labeling, or not, and 
the $L(p, q)$-\textsc{Labeling} problem asks the minimum $k$ among all possible assignments.
The minimum value $k$ is called the $L(p, q)$-labeling number, and we 
denote it by $\lambda_{p,q}(G)$, or simply $\lambda_{p,q}$. Notice that we can use $k+1$ different labels when $\lambda_{p,q}(G) = k$. 

The original notion of $L(p, q)$-labeling can be seen in the context of frequency
assignment. Suppose that vertices in a graph represent wireless devices. The presence/absence of edges indicates the presence/absence of direct communication between the devices. If two devices are very close, that is, they are connected in the graph, they need to use sufficiently different frequencies, that is, their frequencies should be  apart at least $p$. If two devices are not very but still close, that is, they are at distance $2$ in the graph, their frequencies should be apart at least $q \ (\le p)$. 
Thus, the setting of $q=1$ as one unit and $p\ge q=1$ is considered natural and interesting, and the minimization of used range becomes the issue. Note that $L(1,1)$-labeling on $G$ is equivalent to the ordinary coloring on the square of $G$, which is denoted by $G^2$. 
From these, \textsc{$L(p,1)$-Labeling} for $p>1$ is intensively and extensively studied among several possible settings of $p$. 
In particular, \textsc{$L(2,1)$-Labeling} is considered the most important. A reason is that it is natural and suitable as a basic step to consider, and another reason is that the computational complexity (that is, hardness or polynomial-time solvability) tends to be inherited from $L(2,1)$ to $L(p,1)$ of $p>2$; for example, if \textsc{$L(2,1)$-Labeling} is NP-hard in a setting, the hardness proof could be modified to \textsc{$L(p,1)$-Labeling} in the same setting. Designing a polynomial time algorithm is also. 
We can find various related results on $L(p, q)$-labelings in comprehensive surveys by Calamoneri~\cite{calamoneri2011h} and by Yeh~\cite{yeh2006}. 

The notion of \textsc{$L(p,q)$-Labeling} firstly appeared in~\cite{Hale1980} and~\cite{Roberts1991}. 
Griggs and Yeh formally introduced the \textsc{$L(p,q)$-Labeling} problem (actually, it was \textsc{$L(2,1)$-Labeling})~\cite{griggs1992labelling}.
They also show that \textsc{$L(2,1)$-Labeling} is NP-hard in general.
Furthermore, \textsc{$L(2,1)$-Labeling} is shown to be NP-hard even for planar graphs, bipartite graphs, chordal graphs~\cite{bodlaender2004approximations}, graphs with diameter of $2$~\cite{griggs1992labelling} and graphs with tree-width $2$~\cite{fiala2005distance}. 
Moreover, for every $k\ge 4$, \textsc{$k$-$L(2,1)$-Labeling}, that is the decision version of \textsc{$L(2,1)$-Labeling} is NP-complete for general graphs~\cite{FKK2001} and even for planar graphs~\cite{EHN10}. These results imply that \textsc{$k$-$L(2,1)$-Labeling} is NP-complete
for every $\Delta\ge 3$, where $\Delta$ denotes the maximum degree. 
On the other hand, \textsc{$L(2,1)$-Labeling} can be solved in polynomial time for paths, cycles, wheels~\cite{griggs1992labelling}, but these are rather trivial. 
For non-trivial graph classes, only a few graph classes (e.g., co-graphs~\cite{chang19962} and outerplanar graphs~\cite{koller2004frequency})  are known to be solvable in polynomial time. 
In particular, Griggs and Yeh conjectured that \textsc{$L(2,1)$-Labeling} on trees was NP-hard, 
which was later disproved (under P$\neq$NP) by the existence of an $O(n^{5.5})$-time algorithm~\cite{chang19962}. It is now known that \textsc{$L(p,1)$-Labeling} on trees can be solved in linear time~\cite{HIOU2013}. For more algorithmic results, see~\cite{IJNC85}. 

From these results, we roughly understand the boundary between polynomial-time solvability and NP-hardness concerning graph classes, and studies are going to fixed-parameter (in)tractability. For a problem $A$ with input size $n$ and parameter $t$, $A$ is called  \emph{fixed-parameter tractable} with respect to $t$ if there is an algorithm whose running time is $g(t)n^{O(1)}$, where $g$ is a certain function. Such an algorithm is called a \emph{fixed-parameter} algorithm. If problem $A$ is NP-hard for a constant value of $t$, there is no fixed-parameter algorithm unless P$=$NP; we say $A$ is paraNP-hard.  
Unfortunately, \textsc{$L(2,1)$-Labeling} is already shown to be paraNP-hard for several parameters such as $\lambda_{2,1}$, maximum degree and tree-width as seen above. For positive results, there are fixed-parameter algorithms with respect to vertex cover number~\cite{FGK2011} or neighborhood diversity~\cite{fiala2018parameterized}. Note that vertex cover number is a  stronger parameter than tree-width, which means that if the vertex cover number is bounded, the tree-width is also.  There is still a gap on fixed-parameter (in)tractability between vertex cover number and  tree-width. For such a situation, two approaches can be taken. One is to finely classify intermediate parameters and see fixed-parameter (in)tractability for them, and the other is to combine two or more parameters and see fixed-parameter (in)tractability under the combinations. 
In this paper, we take the latter approach. 



\subsection{Our contribution}
In this paper, we present algorithms with combined parameters. The parameters that we focus on are clique-width ($\cw$), tree-width ($\tw$), maximum clique size ($\omega$), maximum degree ($\Delta$) and twin cover number ($\tc$). These are selected in connection with aforementioned parameters, $\lambda_{p,1}$, maximum degree and tree-width. Maximum clique size and clique-width are well used parameters weaker than tree-width. Maximum degree itself is a considered parameter, which is strongly related to $\lambda_{p,q}(G)$. In fact, it is easy to see that $\lambda_{p,1}\ge \Delta+p-1$, and $\lambda_{p,1}\le \Delta^2+(p-1)\Delta-2$~\cite{Goncalves2008}. Thus, $\lambda_{p,1}$ and $\Delta$ are parameters equivalent in terms of fixed-parameter (in)tractability. Twin cover number is picked up as a parameter that is moderately weaker than vertex cover number but stronger than clique-width and is also incomparable to neighborhood diversity. 

These parameters are ordered in the following two ways: (1) $(\vc \succeq) \{\tw,\tc \}\succeq \cw$ and (2) $(\lambda_{p,1} \simeq)\Delta \succeq \omega$. Here, for graph parameters $\alpha$ and $\beta$, $\alpha \succeq \beta$ represents that there is a positive function $g$ such that $g(\alpha(G))\ge \beta(G)$ holds for any $G$, and we denote $\alpha \simeq \beta$ if $\alpha \succeq \beta$ and $\beta \succeq \alpha$. For combined parameters of one from (1) and another from (2), we design fixed-parameter algorithms. Note that some combination yields essentially one parameter. For example, $\tw+\omega$ is equivalent to $\tw$, because $\tw \ge \omega-1$ holds. The obtained results are listed below: 
\begin{itemize}
    \item \textsc{$L(p,1)$-Labeling} can be solved in time $\Delta^{O(\tw \Delta)}n$ for $p\ge 1$. 
    Since it is known that $\tw \le 3 \cw \Delta -1$ (\cite{WG00}), it is also a $\Delta^{O(\cw \Delta^2)}n$-time algorithm, 
    which implies \textsc{$L(p,1)$-Labeling} is actually FPT with respect to $\cw+\Delta$. 
    This result also implies that \textsc{$L(p,1)$-Labeling} is FPT when parameterized by band-width.
    \item \textsc{$L(p,1)$-Labeling} is FPT when parameterized by $\tc+\omega$. Since $\tc+\omega \le \vc+1$ for any graph, it generalizes the fixed-parameter tractability with respect to vertex cover number in \cite{FGK2011}. 
    Since $\tc+\omega\ge \tw$, $\tc+\omega$ is located between $\tw$ and $\vc$. 
    \item  
    \textsc{$L(1,1)$-Labeling} is FPT when parameterized by \emph{only} twin cover number. This also yields a fixed-parameter $p$-approximation algorithm for \textsc{$L(p,1)$-Labeling} with respect to twin cover number.
\end{itemize}
Figure \ref{Fig:L21:result} illustrates the detailed relationship between graph parameters and the parameterized complexity of \textsc{$L(p,1)$-Labeling}.



\begin{figure}[tbp]
\centering
  \includegraphics[width=9cm]{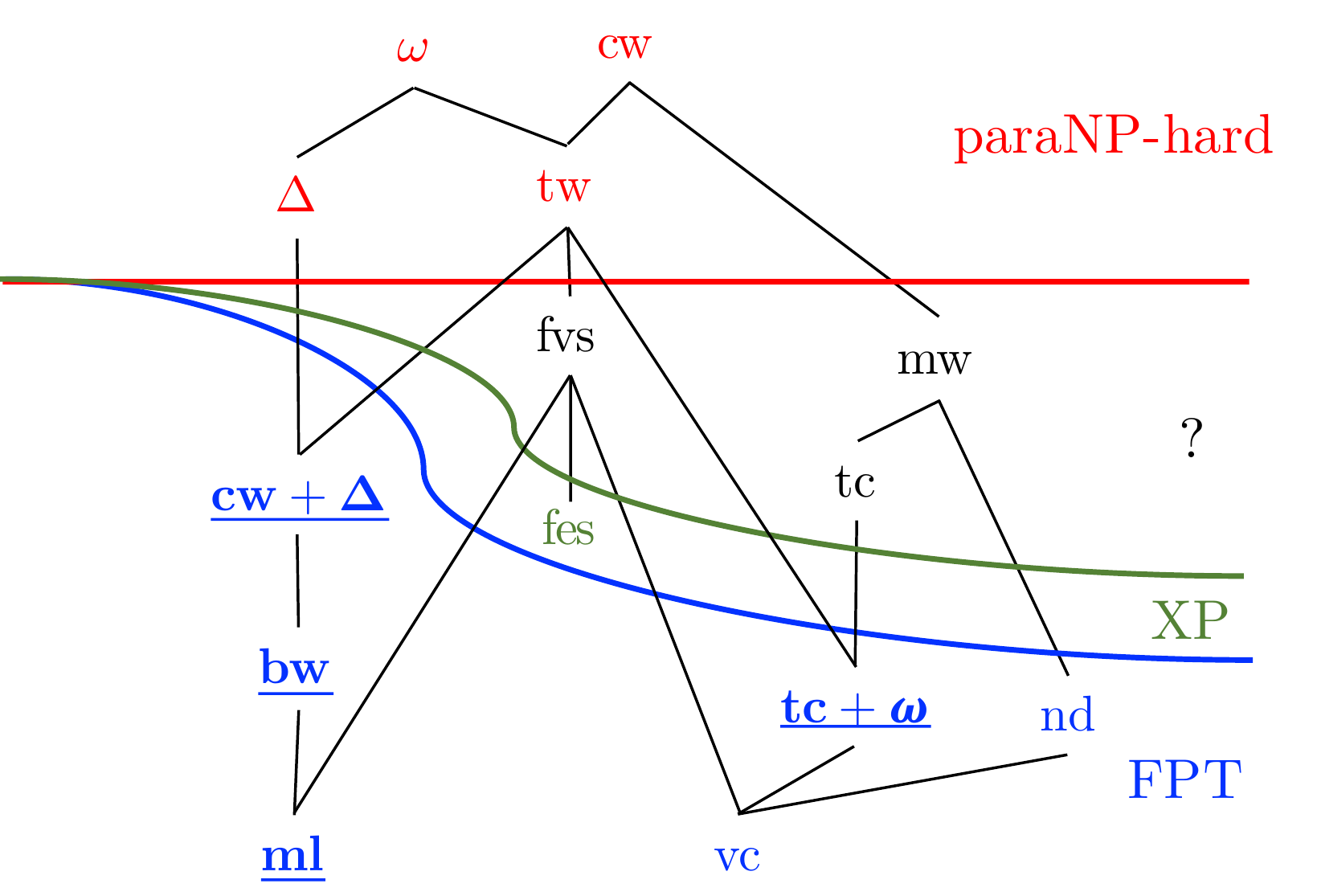}
\caption{The relationship between graph parameters and the parameterized complexity of \textsc{$L(p,1)$-Labeling}. Let $\mcs, \Delta, \cw, \mw, \nd, \tc, \tw, \fvs, \fes, \bw, \ml$, and $\vc$ denote maximum clique size, maximum degree, clique-width, modular-width, neighborhood diversity,  twin cover number, tree-width, feedback vertex set number, feedback edge set number, band-width, max leaf number, and vertex cover number, respectively.
Connections between two parameters imply that the upper is bounded by a function of the lower.
The underlines for parameters indicate that they are obtained in this paper. 
}
\label{Fig:L21:result}
\end{figure}

\subsection{Related work}
In this subsection, we mainly see related work on the parameterized complexity of \textsc{$L(p,1)$-Labeling}.  

We first see the case of $p>1$. It is NP-hard even on graphs of tree-width 2~\cite{fiala2005distance}.
Using stronger parameters than tree-width, Fiala et al. showed that \textsc{$L(p,1)$-Labeling} is fixed-parameter tractable when parameterized by vertex cover~\cite{FGK2011} and neighborhood diversity \cite{fiala2018parameterized}. 
Moreover, Fiala, Kloks and Kratochv{\'\i}l showed that the problem is XP when parameterized by feedback edge set number~\cite{FKK2001}.
For approximation, it is NP-hard to approximate \textsc{$L(p,1)$-Labeling} within a factor of $n^{0.5-\varepsilon}$ for any $\varepsilon>0$, 
whereas it can be approximated within $O(n(\log\log n)^2/\log^3 n)$~\cite{halldorsson2006approximating}.  

For \textsc{$L(1,1)$-Labeling}, it can be solved in time $O(\Delta^{2^{8(\tw+1)+1}} n+n^3)$, and hence it is XP by tree-width \cite{ZKN1996}. This result is tight in the sense of fixed-parameter (in)tractability, because  it is W[1]-hard when parameterized by tree-width \cite{FGK2011}.
Moreover, it can be solved in time $O(\cw^3 2^{6\cw}n^{2^{4\cw}+2^{2\cw}+1})$ \cite{Todinca2003}. 

Apart from \textsc{$L(p,1)$-Labeling}, twin cover number is a relatively new graph parameter, which is introduced in \cite{Ganian2015} as a stronger parameter than vertex cover number. 
In the same paper, many problems are shown to be FPT when parameterized by twin cover number, and it is getting to be a standard parameter (e.g., ~\cite{AEHLT2020,Bodlaender2020,EHKK2019,Gaspers2019,Jansen2019,Knop2019}). Recently, for \textsc{Imbalance}, which is one of graph layout problems, a parameterized algorithm is presented~\cite{MM2020}. 
It is interesting that they also adopt twin cover number plus maximum clique size as the parameters.  
\section{Preliminaries}\label{sec:pre}
In this paper, we use the standard graph notations.
Suppose that $G=(V,E)$ is a simple and connected graph with the vertex set $V$ and the edge set $E$. We sometimes use $V(G)$ or $E(G)$ instead of $V$ or $E$ respecively, to specify graph $G$. 
For $G=(V,E)$, we define the number of vertices $n=|V|$ and the number of edges $m=|E|$. 
For $V'\subseteq V$, we denote by $G[V']$ the subgraph of $G$ induced by $V'$. 
For two vertices $u$ and $v$, the \emph{distance} $\dist_G(u,v)$ is defined by the length of a shortest path between $u$ and $v$ where the length of a path is the number of edges of it.
We denote the closed neighbourhood and the open neighbourhood of a vertex $v$ by $N_G[v]$ and $N_G(v)$, respectively.
We also define $N^{\ell}_G[v]=\{u \mid \dist_G(u,v)=\ell\}$, $N^{\le \ell}_G[v]=\{u \mid \dist_G(u,v)\le \ell\}$, $N^{\ell}_G(v)=N^{\ell}_G[v]\setminus \{v\}$, and $N^{\le \ell}_G(v)=N^{\le \ell}_G[v]\setminus \{v\}$.
For a set $S\subseteq V$, let $N_G(S)=\bigcup_{v\in S}N_G(v)$ and $N_G[S]=\bigcup_{v\in S}N_G[v]$.
The degree of $v$ is denoted by $d_G(v)=|N_G(v)|$. 
The maximum degree of $G$ is denoted by $\Delta(G)$. 
For simplicity, we sometimes omit the subscript $G$.

The $k$-th power $G^k=(V,E^k)$ of a graph $G=(V,E)$ is a graph such that the set of vertices is $V$ and there is an edge $(u,v)$ in $E^k$ if and only if there is a path of length at most $k$ between $u$ and $v$ in $G$~\cite{Adrian2008}. 
In particular, $G^2$ is called the \emph{square} of $G$.

\subsection{Graph parameters}
\paragraph*{Clique-width} 
\begin{definition}\label{def:clique-width}
Let $c$ be a positive integer.
A \emph{$c$-graph} is a graph such that each vertex is labeled by an integer in $\{1, 2, \ldots, c\}$.
A vertex labeled by $i$ is called an \emph{$i$-labeled vertex}.
The \emph{clique-width} $\cw(G)$ is the minimum integer $c$ such that $G$ can be constructed by the following operations.
\begin{description}
    \item[{[O1]}] Add a new vertex with label $i \in \{1, 2, \ldots, c\}$;
    \item[{[O2]}] Take a disjoint union of $c$-graphs $G_1$ and $G_2$;
    \item[{[O3]}] Take two labels $i$ and $j$ and add an edge between every pair of an $i$-labeled vertex and a $j$-labeled vertex; 
    \item[{[O4]}] Relabel $i$-labeled vertices to label $j$.
\end{description}
\end{definition}

\paragraph*{Tree-width}
\begin{definition}[Tree Decomposition]
\label{def:treewidth}
A {\em tree decomposition} of a graph $G=(V,E)$ is defined as 
a pair $\langle {\cal X}, T\rangle$, where $T$ is a tree with node set $I(T)$ and ${\cal
X}=\{X_i \mid i\in I(T) \}$ is a collection of subsets, called {\em bags}, of $V$ such that:
\begin{enumerate}
\item \emph{(vertex condition)} $\bigcup_{i\in I(T)} X_i =V$;
\item \emph{(edge condition)} For every $\{u, v\}\in E$, there exists an $i\in I(T)$ such that 
	   $\{u, v\} \subseteq X_i$;
\item \emph{(coherence property)} 
For every $u\in V$, $I_u = \{i \in I(T) \mid u\in X_i \}$ induces a connected subtree of $T$.  
   
\end{enumerate}
The {\em width} of a tree decomposition is defined as $\max_{i\in I} |X_i| - 1$ and the {\em tree-width} of $G$, denoted by $\tw(G)$,  is defined as  the minimum width among all possible tree decompositions of $G$.

\end{definition}

\begin{definition}[Nice Tree Decomposition]\label{def:nicetree}
 A tree decomposition  $\langle {\cal X}, T\rangle$ is called a {\em nice tree decomposition} if it satisfies the following:
\begin{description}
\item[1.] $T$ is rooted at a designated node $r(T) \in I$ satisfying $X_{r(T)}=\emptyset$, called 
the {\em root node}.
\item[2.] Every node of the tree $T$ has at most two children.
\item[3.] Each node $i$ in $T$ has one of the following five types:
\begin{itemize}
\item A {\em leaf} node $i$ has no children and its bag $X_i$ satisfies $X_i = \emptyset$,
\item An {\em introduce vertex} $v$ node $i$  has exactly one child $j$ with $X_i = X_j \cup \{v\} $ for a vertex $v\in V$,
\item An {\em introduce edge} $\{u,v\}$ node $i$ has exactly one child $j$ and  labeled with an edge $\{u, v\} \in E$ where $u, v \in X_i$ and  $X_i = X_j$,
\item A {\em forget} $v$  node $i$ has exactly one child $j$ and satisfies $X_i = X_j \setminus \{v\}$ for
a vertex $v \in V$, and
\item A {\em join} node $i$ has exactly two children $j_1, j_2$ and satisfies $X_{j_1}=X_i$ and $X_{j_2}=X_i$.
\end{itemize} 
\end{description}
We additionally require that every edge in $E$ is introduced exactly once.
\end{definition}
By the last statement, every edge is assigned to exactly one node. An assignment is done by an introduce edge node, for a pair of vertices that have already been introduced. This implies that for an introduce vertex $v$ node $i$, $v$ is an isolated vertex in $G_i$, where $G_i=(V_i, E_i)$ is defined by $V_i$, the union of all bags $X_j$ such that $j=i$ or $j$ is a descendant of $i$, and $E_i\subseteq E$, the set of all edges introduced at $i$ (if $i$ is an introduce edge node) or a descendant of $i$.     

One can compute the treewidth of $G$ and its tree decomposition in time ${\tw}^{O({\tw}^3)}n$~\cite{bodlaender1996linear}.
Moreover, any tree decomposition with $\ell$ nodes can be transformed  to a nice tree decomposition with $O({\tw}\cdot n)$ bags
and the same width in time $O(\tw\cdot \max\{\ell,n\})$~\cite{Cygan2015}.

\paragraph*{Twin cover}
Two vertices $u,v$ are called {\em twins} if both $u$ and $v$ have the same neighbors. 
Moreover, if twins $u,v$ have edge $\{u,v\}$, they are called {\em true twins} and the edge is called a {\em twin edge}. 
Then a {\em twin cover} of $G$ is defined as follows.
\begin{definition}[\cite{Ganian2015}]\label{def:twincover}
A set of vertices $X$ is a {\em twin cover} of $G$ if every edge $\{u,v\}\in E$ satisfies either
\begin{itemize}
\item $u\in X$ or $v\in X$, or
\item $u,v$ are true twins.
\end{itemize}  
The {\em twin cover number} of $G$, denoted by $\tc(G)$, is defined as the minimum size of twin covers in $G$. 
\end{definition}
An important observation is that the complement $V\setminus X$ of a twin cover $X$ induces disjoint cliques.
Moreover, for each clique $Z$ of $G[V \setminus X]$, $N(u) \cap X = N(v) \cap X$ for every $u, v \in Z$~\cite{Ganian2015}.

A {\em vertex cover} $X$ is the set of vertices such that for every edge, at least one endpoint is in $X$.
The {\em vertex cover number} of $G$, denoted by $\vc(G)$, is defined as the minimum size of vertex covers in $G$. Since every vertex cover of $G$ is also a twin cover of $G$, $\tc(G)\le \vc(G)$ holds.
Also, for any graph $G$, we have $\tc(G)+\omega(G)\le \vc(G)+1$.

\paragraph*{Band-width}
For a graph $G=(V,E)$, the \emph{band-width} $\bw(f)$ of a map $f: V\rightarrow [1,n]$ is defined by $\max_{(i,j)\in E}|f(i)-f(j)|$. The \emph{band-width} $\bw(G)$ of $G$ is defined by the minimum value of $\max_{(i,j)\in E}|f(i)-f(j)$ among all possible $f$, that is, $\bw(G)=\min_{f: V\rightarrow [1,n]}\bw(f)$.

\subsection{Integer Linear Programming}
\textsc{Integer Linear Programming Feasibility} is formulated as follows.

\begin{description}
\item[Input: ] An $q\times p$ matrix $A$ with integer elements, an integer vector $b\in \mathbb{Z}^q$
\item[Question: ] Is there a vector $x\in \mathbb{Z}^p$ such that $A\cdot x \le b$.
\end{description}

Lenstra \cite{Lenstra1983} proved that \textsc{Integer Linear Programming Feasibility} is FPT when parameterized by the number of variables and the running time was improved by Frank and Tardos \cite{Frank1987} and by Kannan \cite{Kannan1987}.
\begin{theorem}[\cite{Lenstra1983,Frank1987,Kannan1987}]\label{thm:ILP}
\textsc{Integer Linear Programming Feasibility}  can be solved using $O(p^{2.5p+o(p)}\cdot L)$ arithmetic operations and space polynomial in $L$, where $L$ is the number of bits in the input.
\end{theorem}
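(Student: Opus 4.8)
The plan is to prove the statement by Lenstra's dimension-reduction method, whose engine is the \emph{flatness theorem} from the geometry of numbers. I would first normalize the input. Given the system $A x \le b$ defining a polyhedron $P \subseteq \mathbb{R}^p$, I would reduce to the case where $P$ is bounded and full-dimensional: emptiness and unboundedness are detectable by ordinary rational linear programming, and if $P$ lies in a proper affine subspace $H$, I would compute an affine unimodular change of coordinates mapping $H \cap \mathbb{Z}^p$ onto $\mathbb{Z}^{p-1}$ and recurse in one fewer variable. Thus the core problem becomes: decide whether a full-dimensional polytope $P$ in $\mathbb{R}^p$ contains a point of $\mathbb{Z}^p$.

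For the full-dimensional case the key dichotomy is Khinchine's flatness theorem: there is a function $c(p)$ such that every convex body $K \subseteq \mathbb{R}^p$ containing no integer point has lattice width at most $c(p)$, i.e. there is a nonzero $d \in \mathbb{Z}^p$ with $\max_{x \in K} d^\top x - \min_{x \in K} d^\top x \le c(p)$. I would exploit this as follows. After applying a lattice-basis-reduced change of coordinates to round $P$ so that an inscribed and a circumscribed ball have radii within a factor depending only on $p$, either the inscribed ball is large enough to force an integer point (which can then be produced directly), or I obtain a thin integer direction $d$. In the thin case every integer point of $P$ satisfies $d^\top x = t$ for one of at most $c(p)+1$ integer values $t$; I would branch over these values, intersect $P$ with the hyperplane $\{d^\top x = t\}$ on each branch, thereby lowering the dimension by one, and recurse.

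To make this algorithmic I would use the Lenstra--Lov\'asz basis reduction to compute the rounding transformation and the short direction $d$ in polynomial time, and apply the Frank--Tardos preprocessing to replace $A$ and $b$ by an equivalent system whose entries have bit-length polynomial in $p$ and independent of the original magnitudes, which is what keeps the arithmetic cost at a fixed polynomial in $L$ throughout the recursion. The recursion tree then has depth at most $p$ and branching factor at most $c(p)+1$ at each node, so the number of leaves is bounded by $\bigl(c(p)+1\bigr)^{p}$, a function of $p$ alone; charging a fixed polynomial number of arithmetic operations to each node yields an FPT running time of the form $f(p)\cdot L^{O(1)}$.

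The step I expect to be the main obstacle is the quantitative control of the geometry: proving a usable bound on the flatness constant $c(p)$ and, more delicately, turning the existential flatness statement into a polynomial-time procedure that either certifies an integer point in the ``fat'' case or outputs the integer direction $d$ in the ``thin'' case. This is precisely where lattice basis reduction enters, and obtaining the sharp exponent $2.5p+o(p)$ asserted in the theorem requires the refined analysis of Kannan rather than the crude recursion sketched here; I would accordingly treat the clean bound $f(p)\cdot L^{O(1)}$ as the target and invoke the sharper analyses for the stated exponent.
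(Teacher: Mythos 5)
This statement is an imported result in the paper: it is quoted directly from the cited works of Lenstra, Frank--Tardos, and Kannan, and the paper supplies no proof of its own. Your sketch correctly outlines exactly those works' method --- flatness-based dimension reduction, LLL rounding, Frank--Tardos preprocessing to control bit-lengths, and Kannan's refined analysis for the $p^{2.5p+o(p)}$ exponent --- and since you, like the paper, ultimately invoke Kannan's analysis for the stated exponent rather than re-deriving it (your crude recursion only certifies a bound of the form $f(p)\cdot L^{O(1)}$), your proposal is consistent with the paper's treatment.
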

\section{Parameterization by $\cw + \Delta$ and $\tw + \Delta$}\label{sec:treewidth}
As \textsc{$L(p,1)$-Labeling} is paraNP-hard for tree-width, so is for clique-width.  
In this section, as a complement, we show that \textsc{$L(p,1)$-Labeling} (actually, \textsc{$L(p,q)$-Labeling} for any constant $p$ and $q$) is fixed-parameter tractable when parameterized by $\cw + \Delta$.

To this end, we give a fixed-parameter algorithm for \textsc{$L(p,1)$-Labeling} parameterized by not $\cw+\Delta$ but 
$\tw+\Delta$, which actually implies that the problem is FPT with respect to $\cw+\Delta$, because it is known that $\tw \le 3\cw-1$~\cite{WG00}.   
The running time of the algorithm is $\Delta^{O(\tw \Delta)}n$, and so it is $\Delta^{O(\cw \Delta^2)}n$. 

In the algorithm, we first construct the square $G^2$ of $G$ and then compute \textsc{$L(p,1)$-Labeling} of $G$  by dynamic programming on a nice tree decomposition $\langle {\mathcal X'}, T'\rangle$ of $G^2$. Actually, the algorithm runs for \textsc{$L(p,q)$-Labeling} though the running time depends on $\lambda$. 
One can obtain the square of $G^2$ in time $O(m\Delta(G))=O(\Delta(G)^2n)$.
We then prove the following lemma.
\begin{lemma}\label{lem:G^2}
Given a tree decomposition of a graph $G$ of width $t$ with $\ell$ bags, one can construct a tree decomposition of $G^2$ of width at most $(t+1)\Delta(G)+t$ with $\ell$ bags in time $O(t\Delta(G)\ell)$.
\end{lemma}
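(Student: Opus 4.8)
The plan is to keep the tree $T$ unchanged (so that the number of bags stays $\ell$) and simply enlarge each bag to absorb the neighbours of its vertices. Concretely, for each node $i$ I would define the new bag $X_i' := X_i \cup N_G(X_i) = X_i \cup \bigcup_{w \in X_i} N_G(w)$, and take $\langle \mathcal{X}', T\rangle$ with $\mathcal{X}' = \{X_i' \mid i \in I(T)\}$ as the candidate tree decomposition of $G^2$. The width bound is then immediate: since $|X_i| \le t+1$ and each of its vertices contributes at most $\Delta(G)$ neighbours, we have $|X_i'| \le (t+1)(\Delta(G)+1)$, so the width is at most $(t+1)(\Delta(G)+1) - 1 = (t+1)\Delta(G) + t$, as required.

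Next I would verify the three tree-decomposition axioms. The vertex condition is trivial because $X_i \subseteq X_i'$ for every $i$. For the edge condition, take any edge $\{u,v\}$ of $G^2$. If $\dist_G(u,v)=1$, the original decomposition already has a bag $X_i \supseteq \{u,v\}$, hence $X_i' \supseteq \{u,v\}$. If $\dist_G(u,v)=2$, pick a common neighbour $w$; the edge $\{u,w\}$ of $G$ lies in some bag $X_i$, and then $u \in X_i \subseteq X_i'$ while $v \in N_G(w) \subseteq N_G(X_i) \subseteq X_i'$, so again $\{u,v\}\subseteq X_i'$.

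The coherence property is the step I expect to carry the real content. The key observation is the identity
\[
\{\, i \mid v \in X_i' \,\} = \bigcup_{u \in N_G[v]} I_u ,
\]
where $I_u = \{i \mid u \in X_i\}$, which holds because $v \in X_i'$ is equivalent to $X_i \cap N_G[v] \neq \emptyset$. Each $I_u$ is a connected subtree of $T$ by coherence of the original decomposition, and for every $u \in N_G(v)$ the edge $\{u,v\}$ forces some bag to contain both $u$ and $v$, i.e.\ $I_u \cap I_v \neq \emptyset$. Thus the union above is a union of connected subtrees each meeting the common connected subtree $I_v$; such a union is connected in a tree, which yields coherence.

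Finally, the running time follows by observing that each bag $X_i'$ can be computed by scanning the at most $t+1$ vertices of $X_i$ and listing the at most $\Delta(G)$ neighbours of each, costing $O(t\Delta(G))$ per bag and $O(t\Delta(G)\ell)$ over all $\ell$ bags. The only point requiring genuine care is the connectivity argument for coherence; everything else is bookkeeping.
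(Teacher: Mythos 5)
Your proof is correct and follows essentially the same route as the paper: enlarging each bag to $X_i' = X_i \cup N_G(X_i)$ on the unchanged tree, verifying the edge condition via a common neighbour, and establishing coherence by writing $\{i \mid v \in X_i'\}$ as the union $\bigcup_{u \in N_G[v]} I_u$ of connected subtrees, each meeting $I_v$. The width and running-time bookkeeping also matches the paper's argument.
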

\begin{proof}


We are given a tree decomposition $\langle {\mathcal X}, T\rangle$ of $G$ of width $t$.
Let $X'_i=X_i \cup N(X_i)$ and ${\mathcal X'}=\{X'_i\mid i\in I(T)\}$ be the set of bags. We here define $\langle {\mathcal X'}, T'\rangle$ as a tree decomposition of $G^2$, 
where $T'$ and $T$ are identical; $T$ and $T'$ has the same node set and the same structure, where each $i \in I(T')$ corresponds to $i\in I(T)$. 
In the following, we denote $\langle {\mathcal X'}, T\rangle$ instead of $\langle {\mathcal X'}, T'\rangle$. 

We can see that $\langle {\mathcal X'}, T\rangle$ is really a tree decomposition of $G^2$ with width $(t+1)\Delta(G)+t$. It satisfies the properties of tree decomposition indeed: Since $\bigcup_{i\in I} X'_i = \bigcup_{i\in I} (X_i \cup N(X_i)) = V(G) = V(G^2)$, the vertex condition is satisfied. 
We next see edge condition. For each $e\in E$, there is $X_i$ containing $e$, so $e\in X'_i$. For each $\{u,v\}\in E^2 \setminus E$, 
there is a vertex $v' (\neq u,v)$ such that $\{u,v'\}\in E$ and $\{v',v\}\in E$. Thus there is $X_i$ satisfying $\{u,v'\}\subseteq X_i$, which implies $\{u,v\}\subseteq X_i \cup \{v\} \subseteq X_i \cup N(\{v'\}) \subseteq X'_i$. These show that the edge condition is satisfied.  



Finally, 
we check coherent property: we show that for every $u\in V$, $I'_u = \{i \in I(T) \mid u\in X'_i \}$ induces a connected subtree of $T$. 
Note that 
\[
I'_u = \{i \in I(T) \mid u\in X'_i \}=\{i \in I(T) \mid u\in X_i \}\cup \bigcup_{v\in N(u)}\{i \in I(T) \mid v\in X_i \}. 
\]
Here, the subgraph $T_v$ of $T$ induced by $\{i \in I(T) \mid u\in X_i \}$ is connected by 
the coherent property of $\langle {\mathcal X}, T\rangle$. 
Also for each $v\in N(u)$, the subgraph $T_v$ of $T$ induced by $\{i \in I(T) \mid v\in X_i \}$ is connected. By $\{u,v\}\in E$, 
the edge condition of $\langle {\mathcal X}, T\rangle$ implies that there exists a bag $X_j$ containing both $u$ and $v$. Since $T_u$ and $T_v$ has a common node $j$, the subgraph of $T$ induced by $\{i \in I(T) \mid u\in X_i \} \cup \{i \in I(T) \mid v\in X_i \}$ is also connected, 
which leads that the subgraph of $T$ induced by $I'_u$ is also connected.

Hence, $\langle {\mathcal X'}, T\rangle$ is a tree decomposition of $G^2$. 
Since the size of bag $X'_i$ is $|X'_i|=|X_i \cup N(X_i)|=|\bigcup_{u\in X_i}N[u]|\le (t+1)(\Delta(G)+1)$, 
the width is at most $(t+1)(\Delta(G)+1)-1=(t+1)\Delta(G)+t$. 
The construction of $\langle {\mathcal X'}, T\rangle$ is done by preparing each $X'_i$, which takes $O(t\Delta(G))$ steps for each $i$. 
Thus it can be done in time $O(t\Delta(G)\ell)$ in total. \qed
\end{proof}

\begin{corollary}
$\tw(G^2)\le \tw(G)(\Delta(G)+1)-1$ holds.
\end{corollary}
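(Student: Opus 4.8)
The plan is to exhibit an explicit tree decomposition of $G^2$ in which every bag has size at most $\tw(G)(\Delta(G)+1)$; since the width of a decomposition is the maximum bag size minus one, this immediately gives the stated inequality. The cheapest first move is to feed an optimal tree decomposition of $G$ (so that $t=\tw(G)$) into Lemma~\ref{lem:G^2}. This at once produces a tree decomposition of $G^2$ of width at most $(\tw(G)+1)\Delta(G)+\tw(G)$, which already certifies $\tw(G^2)=O(\tw(G)\Delta(G))$. However, this is larger than the target by roughly $\Delta(G)$, because the construction $X_i\mapsto X_i\cup N(X_i)$ expands the \emph{full} closed neighbourhood of all $t+1$ vertices of a bag and thus pays a factor $(t+1)$ rather than $t$. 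So the substance of the proof is to shave off one closed neighbourhood's worth per bag.

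The structural fact I would build on is that the closed neighbourhoods form a clique edge cover of $G^2$: for every edge $\{u,w\}\in E(G^2)$ there is a vertex $v$ with $u,w\in N_G[v]$ (take $v=u$ when $\{u,w\}\in E(G)$, and a common neighbour $v$ when $\dist_G(u,w)=2$), and each such clique $N_G[v]$ has size at most $\Delta(G)+1$. Hence it suffices to arrange the cliques $\{N_G[v]\}_{v\in V}$ into a tree so that (i) coherence holds and (ii) at most $\tw(G)$ of these cliques are merged into any single bag. Concretely, I would fix an optimal rooted nice tree decomposition of $G$, assign to each vertex $v$ a home node (the topmost bag containing $v$), and then define each bag of $G^2$ as the union of the cliques $N_G[v]$ over the vertices $v$ whose representation is forced to be present at that node, distributing neighbourhoods across adjacent nodes rather than heaping an entire bag's neighbourhoods together. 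One would then verify the three tree-decomposition conditions for $G^2$ against this assignment and bound the bag size by $\tw(G)(\Delta(G)+1)$.

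As a sanity check that the target constant is the right one, consider the tree case $\tw(G)=1$: rooting $G$ and taking the bag $N_G[v]$ at each vertex $v$, joined to the bag of its parent, is already a valid tree decomposition of $G^2$ of width $\Delta(G)=\tw(G)(\Delta(G)+1)-1$, which matches the well-known equality $\tw(T^2)=\Delta$ for trees. The hard part will be step (ii) for general $\tw(G)$: controlling coherence while keeping the number of merged neighbourhoods at $\tw(G)$ instead of the naive $\tw(G)+1$. The naive choice of unioning the closed neighbourhoods of all vertices of a single bag (as in Lemma~\ref{lem:G^2}) overshoots by exactly one neighbourhood, so the crux is the careful charging that guarantees, at every node, that at most $\tw(G)$ distinct centres $v$ need their clique $N_G[v]$ realised there — this bookkeeping, rather than any single inequality, is where the difficulty concentrates.
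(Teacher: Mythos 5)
Your proposal, as it stands, is not a proof: everything after the (correct) diagnosis is a plan. The construction that is supposed to do the work --- assigning to each node of a rooted nice tree decomposition of $G$ a union of at most $\tw(G)$ cliques $N_G[v]$, ``distributing neighbourhoods across adjacent nodes'' --- is never actually defined (which vertices' neighbourhoods are ``forced to be present'' at a node is left unspecified), and none of the three tree-decomposition conditions is verified for it. You say yourself that the charging argument keeping the count at $\tw(G)$ rather than $\tw(G)+1$ is where the difficulty concentrates; but that is precisely the content of the statement, so the proposal stops exactly where the proof would have to begin. Your tree-case sanity check is correct ($\tw(T^2)=\Delta$ via the bags $N_T[v]$), but it gives no guidance for the hard configuration: a bag of size $\tw(G)+1$ whose vertices have pairwise disjoint closed neighbourhoods, which is exactly where the naive bound $(\tw(G)+1)(\Delta(G)+1)$ is attained.

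That said, your diagnosis is accurate, and it exposes a genuine problem in the paper itself. The paper states this corollary without proof, as an immediate consequence of Lemma~\ref{lem:G^2}; but applying that lemma with $t=\tw(G)$ only gives $\tw(G^2)\le(\tw(G)+1)\Delta(G)+\tw(G)=(\tw(G)+1)(\Delta(G)+1)-1$, which exceeds the claimed bound $\tw(G)(\Delta(G)+1)-1$ by exactly $\Delta(G)+1$ --- the same discrepancy you identified. So the ``cheapest first move'' you rejected is in fact the paper's entire argument, and it does not yield the stated constant. The paper's tightness remark about $K_{1,\Delta}$ shows the authors intended the sharper constant (for stars, $(\tw+1)(\Delta+1)-1=2\Delta+1$ is not tight, while $\tw(\Delta+1)-1=\Delta$ is), but nothing in the paper proves it: as written, either the corollary should be weakened to $\tw(G^2)\le(\tw(G)+1)(\Delta(G)+1)-1$, or a construction of the kind you sketch would have to be completed. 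Note also that nothing else in the paper depends on the sharper constant; Theorem~\ref{thm:tw} uses only the $O(\tw\Delta)$ bound coming from Lemma~\ref{lem:G^2}.
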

Note that this bound is tight, because $\tw(K_{1,\Delta})=1$ and $\tw(K_{1,\Delta}^2)=\tw(K_{\Delta+1})$ $=\Delta$. 
By the above lemma, the tree-width of $G^2$ is bounded if $\tw(G)$ and $\Delta(G)$ are bounded. 
Thus we can design a dynamic programming algorithm on a nice tree decomposition of $G^2$, although we omit the detail.

\begin{lemma}\label{lem:DP:tw}
Given a nice tree decomposition of $G^2$ of width at most $t$, one can compute \textsc{$k$-$L(p,q)$-Labeling} on $G$ in time $O((k+1)^{t+1} t^2n)$.
\end{lemma}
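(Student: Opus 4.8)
The plan is to run a bottom-up dynamic program over the given nice tree decomposition $\langle \mathcal{X}', T'\rangle$ of $G^2$, treating the task as a constrained coloring of $G^2$. The crucial observation is that an edge $\{u,v\}$ of $G^2$ joins a pair of vertices at distance $1$ or $2$ in $G$, and the two cases are distinguished precisely by whether $\{u,v\}\in E(G)$: if $\{u,v\}\in E(G)$ the pair is at distance $1$ and the distance-$1$ condition $|f(u)-f(v)|\ge p$ must hold, whereas if $\{u,v\}\in E^2\setminus E$ the pair is at distance exactly $2$ and the distance-$2$ condition $|f(u)-f(v)|\ge q$ must hold. Since membership in $E(G)$ can be tested in $O(1)$ after $O(m)$ preprocessing, every edge of $G^2$ carries a well-defined separation requirement, and a map $f:V\to\{0,\dots,k\}$ is a valid $k$-$L(p,q)$-labeling of $G$ if and only if it meets the requirement attached to each edge of $G^2$.

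For each node $i$ of $T'$ with bag $X_i'$, I would index the table by the partial labelings $g:X_i'\to\{0,\dots,k\}$, storing a Boolean $D_i[g]$ that is \true{} iff $g$ extends to a labeling of $V_i$ (the vertices introduced in the subtree rooted at $i$) satisfying every requirement attached to an edge of $G^2$ introduced in that subtree. The transitions follow the five node types in the standard way: a leaf stores $D_i[\emptyset]=\true$; an introduce-vertex node copies $D_i[g]=D_j[g|_{X_j'}]$, the new vertex being isolated in $G^2_i$ so far; an introduce-edge node for $\{u,v\}$ sets $D_i[g]=D_j[g]$ when $g$ meets the $\ge p$ or $\ge q$ threshold selected according to $E(G)$, and \false{} otherwise; a forget-$v$ node sets $D_i[g]=\bigvee_{c\in\{0,\dots,k\}}D_j[g\cup\{v\mapsto c\}]$; and a join node sets $D_i[g]=D_{j_1}[g]\wedge D_{j_2}[g]$. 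A valid $k$-$L(p,q)$-labeling exists iff $D_{r}[\emptyset]=\true$ at the root.

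For the running time, each bag has at most $t+1$ vertices, so the table at a node has at most $(k+1)^{t+1}$ entries. Encoding a partial labeling as an index in base $k+1$ and performing the per-entry work — a restriction, the OR over the $k+1$ child labels at a forget node, or the pair of lookups at a join node — costs $O((k+1)t)$ amortized per entry, hence $O((k+1)^{t+1}t)$ per node. Since the nice tree decomposition of $G^2$ has $O(t\cdot n)$ bags (Section~\ref{sec:pre}), the total running time is $O((k+1)^{t+1}t\cdot tn)=O((k+1)^{t+1}t^2 n)$.

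The correctness argument is routine but rests on two points I would verify. First, the requirement that every edge of $G^2$ is introduced exactly once guarantees that each distance-$1$ and distance-$2$ constraint of $G$ is imposed exactly once, so none is missed and none is enforced before both endpoints are present. Second, the coherence property is what makes the forget and join transitions sound: a vertex's label is fixed throughout the connected subtree of bags containing it, so the OR at a forget node existentially quantifies over a label consistent with the rest of the subtree, and the AND at a join node correctly combines two subtrees agreeing on $X_i'$. The only genuinely $L(p,q)$-specific step — and the main thing to get right — is the bookkeeping at introduce-edge nodes that selects the $\ge p$ versus $\ge q$ threshold; everything else is the textbook coloring-style DP on a tree decomposition.
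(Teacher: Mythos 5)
Your proposal is correct and takes exactly the route the paper intends: the paper omits the proof of this lemma, saying only that one can design a dynamic program on a nice tree decomposition of $G^2$, and your writeup supplies precisely that standard DP, with the key $L(p,q)$-specific point (selecting the $\ge p$ versus $\ge q$ threshold at an introduce-edge node according to whether the $G^2$-edge lies in $E(G)$) handled correctly. Your time analysis also matches the claimed bound $O((k+1)^{t+1}t^2n)$, using the $O(t\cdot n)$ bound on the number of nodes of the nice tree decomposition.
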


Here, one can construct a tree decomposition $\langle {\mathcal X}, T\rangle$ of $G$ of width $5\tw(G)+4$ with $O(n)$ bags in time $2^{O(\tw(G))}n$ \cite{Bodlaender2016}. 
By Lemma \ref{lem:G^2}, we can obtain a tree decomposition $\langle {\mathcal X}', T\rangle$ of $G^2$ of width $(5\tw(G)+4+1)\Delta(G) + 5\tw(G)+4=O(\tw(G)\Delta(G))$ from $\langle {\mathcal X}, T\rangle$ in time $O(\tw(G)\Delta(G)n)$. By Lemma \ref{lem:DP:tw} and $\lambda_{p,q}\le \max\{p,q\}\Delta^2$, we have the following theorem .


\begin{theorem}\label{thm:tw}
For any positive constant $p$ and $q$, there is an algorithm to solve 
\textsc{$L(p,q)$-Labeling} in time $\Delta^{O(\tw \Delta)}n$, which is also bounded by $\Delta^{O(\cw \Delta^2)}n$.
\end{theorem}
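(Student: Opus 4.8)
The plan is to assemble Theorem~\ref{thm:tw} directly from the three ingredients already established, tracking the running time through each stage of the pipeline. First I would observe that the algorithm has a clean three-phase structure: (i) obtain a tree decomposition of $G$ itself, (ii) lift it to a tree decomposition of $G^2$ via Lemma~\ref{lem:G^2}, and (iii) run the dynamic program of Lemma~\ref{lem:DP:tw} on (a nice version of) that decomposition. The key is that each phase contributes a factor that is absorbed into the final $\Delta^{O(\tw\Delta)}n$ bound, so the proof is essentially a bookkeeping argument once the width of $G^2$ and the value of $k$ are controlled.

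Concretely, I would start by invoking the result of Bodlaender et al.~\cite{Bodlaender2016} to compute a tree decomposition $\langle\mathcal{X},T\rangle$ of $G$ of width $5\tw(G)+4$ with $O(n)$ bags in time $2^{O(\tw(G))}n$. Applying Lemma~\ref{lem:G^2} with $t=5\tw(G)+4$ yields a tree decomposition $\langle\mathcal{X}',T\rangle$ of $G^2$ of width $(5\tw(G)+5)\Delta(G)+5\tw(G)+4=O(\tw(G)\Delta(G))$, again with $O(n)$ bags, in additional time $O(\tw(G)\Delta(G)n)$. I would then convert this to a nice tree decomposition of the same width with $O(\tw(G)\Delta(G)\cdot n)$ bags using the standard transformation quoted in the preliminaries, so that Lemma~\ref{lem:DP:tw} applies.

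Next I would bound $k$. For \textsc{$L(p,q)$-Labeling} we may take $k=\lambda_{p,q}$, and the crude bound $\lambda_{p,q}\le\max\{p,q\}\Delta^2$ shows that $k+1=O(\Delta^2)$ when $p,q$ are constants. Feeding $t=O(\tw\Delta)$ and $k+1=O(\Delta^2)$ into the running time $O((k+1)^{t+1}t^2n)$ of Lemma~\ref{lem:DP:tw} gives
\[
O\bigl((\Delta^2)^{O(\tw\Delta)}\cdot(\tw\Delta)^2\cdot n\bigr)=\Delta^{O(\tw\Delta)}n,
\]
which dominates the $2^{O(\tw)}n$ and $O(\tw\Delta n)$ costs of the earlier phases. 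Finally, the inequality $\tw\le 3\cw-1$ of~\cite{WG00} lets me substitute $\tw=O(\cw)$ into the exponent, yielding the stated $\Delta^{O(\cw\Delta^2)}n$ form.

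I do not expect a genuine obstacle here, since the substantive work is already encapsulated in the cited lemmas; the only point requiring care is the exponent arithmetic, namely checking that $(\Delta^2)^{O(\tw\Delta)}$ collapses to $\Delta^{O(\tw\Delta)}$ (it does, since squaring the base only doubles the constant in the exponent) and that replacing $\tw$ by $O(\cw)$ inside an exponent that already carries a $\Delta$ factor produces $\Delta^{O(\cw\Delta)}$, not the weaker $\Delta^{O(\cw\Delta^2)}$—so I would note that the extra $\Delta$ in the clique-width bound comes not from this substitution but from the separate inequality $\tw(G^2)=O(\cw(G)\Delta(G)^2)$ obtained by chaining Lemma~\ref{lem:G^2} with $\tw\le 3\cw-1$. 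Verifying that this chaining is consistent with the claimed bound is the single step I would double-check before declaring the theorem proved.
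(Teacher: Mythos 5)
Your pipeline for the tree-width bound is exactly the paper's proof: compute a width-$(5\tw+4)$ decomposition of $G$ with $O(n)$ bags via \cite{Bodlaender2016}, lift it to a width-$O(\tw\Delta)$ decomposition of $G^2$ via Lemma~\ref{lem:G^2}, convert to a nice decomposition, and run Lemma~\ref{lem:DP:tw} with $k\le\max\{p,q\}\Delta^2$; the exponent arithmetic $(\Delta^2)^{O(\tw\Delta)}=\Delta^{O(\tw\Delta)}$ is fine, so that half of the theorem is established correctly.

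The clique-width half, however, rests on a false inequality, and your attempted repair does not close the gap. The bound $\tw\le 3\cw-1$ is simply not true: $K_n$ has clique-width $2$ and tree-width $n-1$. (The paper itself carries this typo in the body of Section~3, but its introduction states the correct form.) What Gurski and Wanke~\cite{WG00} actually give, applied to graphs of maximum degree $\Delta$ (which exclude $K_{\Delta+1,\Delta+1}$ as a subgraph), is $\tw\le 3\cw\Delta-1$, i.e.\ $\tw=O(\cw\Delta)$. Substituting this into $\Delta^{O(\tw\Delta)}$ immediately yields the claimed $\Delta^{O(\cw\Delta^2)}n$; the extra factor of $\Delta$ in the exponent comes precisely from this substitution, not from any separate bound on $\tw(G^2)$. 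Your proposed reconciliation --- that $\tw(G^2)=O(\cw\Delta^2)$ follows from ``chaining Lemma~\ref{lem:G^2} with $\tw\le 3\cw-1$'' --- does not compute: chaining those two statements would give $\tw(G^2)\le(\tw(G)+1)\Delta+\tw(G)=O(\cw\Delta)$, not $O(\cw\Delta^2)$, and in any case the premise is false, so nothing can be derived from it. Likewise your observation that the substitution would yield the ``stronger'' bound $\Delta^{O(\cw\Delta)}n$ is an artifact of the same false premise and evaporates once the correct Gurski--Wanke inequality is used. With $\tw=O(\cw\Delta)$ in hand, the clique-width statement follows in one line from the tree-width statement, which is how the paper concludes.
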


Since $\tw(G)\le \bw(G)$ and $\Delta(G)\le 2\bw(G)$,
we have the following corollary.
\begin{corollary}
For any positive constant $p$ and $q$, \textsc{$L(p,q)$-Labeling}  is fixed-parameter tractable when parameterized by band-width.
\end{corollary}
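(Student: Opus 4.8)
The plan is to obtain the corollary as an immediate consequence of Theorem~\ref{thm:tw}. That theorem already solves \textsc{$L(p,q)$-Labeling} in time $\Delta^{O(\tw\,\Delta)}n$, which is fixed-parameter in the combined parameter $\tw+\Delta$; so the whole task reduces to bounding \emph{both} $\tw(G)$ and $\Delta(G)$ by functions of the single parameter $\bw(G)$. The two inequalities I would invoke are $\tw(G)\le\bw(G)$ and $\Delta(G)\le 2\bw(G)$, stated just above the corollary.

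To keep the argument self-contained I would first verify these two inequalities from the definition of band-width. Fix an optimal layout $f\colon V\to[1,n]$ with $\bw(f)=\bw(G)=:b$. For the degree bound, each neighbour $u$ of a vertex $v$ satisfies $|f(u)-f(v)|\le b$, so $f(u)$ ranges over at most $2b$ values distinct from $f(v)$; hence $d(v)\le 2b$ and $\Delta(G)\le 2b$. For the tree-width bound I would exhibit an explicit path decomposition of width $b$, taking for each $i\in\{1,\dots,n-b\}$ the bag $B_i=\{f^{-1}(i),\dots,f^{-1}(i+b)\}$ of $b+1$ consecutive vertices: the vertex condition is clear, the edge condition holds because the two endpoints of any edge differ by at most $b$ in $f$-value and so share a common bag, and coherence holds because the indices $i$ with $v\in B_i$ form a contiguous interval. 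This decomposition has width $b$, giving $\tw(G)\le\pw(G)\le b=\bw(G)$.

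Substituting $\tw(G)\le b$ and $\Delta(G)\le 2b$ into the running time of Theorem~\ref{thm:tw} then yields a bound of $(2b)^{O(b\cdot 2b)}n=(2b)^{O(b^2)}n=g(\bw(G))\cdot n$ for a computable function $g$, which is exactly the required fixed-parameter form with a linear dependence on $n$. There is essentially no hard step here; the only point to be careful about is that Theorem~\ref{thm:tw} needs \emph{both} of its parameters bounded simultaneously, which is precisely what the two inequalities supply, and that substituting the $\bw$-bounds keeps both the function $g$ computable and the factor multiplying $n$ free of $\bw$.
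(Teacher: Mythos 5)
Your proposal is correct and follows essentially the same route as the paper: the paper derives the corollary directly from Theorem~\ref{thm:tw} together with the inequalities $\tw(G)\le\bw(G)$ and $\Delta(G)\le 2\bw(G)$, exactly as you do. The only difference is that you additionally verify these two folklore inequalities (via the explicit path decomposition and the degree-counting argument), which the paper states without proof; your verifications are sound.
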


\color{black}

\section{Parameterization by twin cover number}\label{sec:twin_cover}
\subsection{\textsc{$L(p,1)$-Labeling} parameterized by $\tc+\omega$}

In this section, we design a fixed-parameter algorithm for \textsc{$L(p,1)$-Labeling} with respect to $\tc + \omega$.
Notice that for a twin cover $X$ of $G=(V,E)$, each of the connected components of $G[V\setminus X]$ forms a clique.  
We categorize vertices in $V\setminus X$ with respect to the neighbors in $X$.
Let $T_1, T_2, \ldots, T_s$ be the sets of vertices having common neighbors in $X$, called \emph{types} of vertices in $V\setminus X$, where $s$ is the number of types. 
Moreover, we say that a clique $C\subseteq V\setminus X$ is of type $T_i$ if $C\subseteq T_i$.
Note that $V\setminus X=\bigcup^{s}_{i=1}T_i$.
Let $n_i=|T_i|$ and $\omega_i$ be the maximum clique size in $T_i$.

We first see a general property about cliques with a common neighbor:  
Suppose that a graph $G$ consists of only cliques and common neighbors $Y$ of all the vertices in the cliques. That is, all the vertices are within distance $2$. Also suppose that vertices in $Y$ has some labels $a_1, a_2, \ldots, a_{|Y|}$ and $L$ is a set of labels that are at least $p$ apart from $a_1, a_2, \ldots, a_{|Y|}$. 
Then the following lemma holds. 
\begin{lemma}\label{lem:maxclique}
Suppose that a graph $G$ and a label set $L$ are defined as above, and let $C_1, C_2, \ldots, C_h$ be the set of the cliques, in the descending order of the size. 
If $|L|\ge \sum_{j} |C_j|$ and $\sum_{j} |C_j| \ge p|C_1|$ hold, there exists an $L(p,1)$-labeling of $C_1,\ldots,C_h$ using only labels in $L$. 
\end{lemma}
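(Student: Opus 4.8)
The plan is to reduce the labeling task to a purely combinatorial arrangement problem and then solve that arrangement problem by a frequency-greedy (task-scheduling) argument whose feasibility is certified exactly by the hypothesis $\sum_j |C_j| \ge p|C_1|$.

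First I would record what the distance constraints actually demand of a labeling restricted to $\bigcup_j C_j$. Since all the vertices lie within distance $2$, any two clique vertices in distinct cliques are at distance $2$ (they share a neighbour in $Y$) and any two in a common clique are adjacent; hence all clique vertices must receive pairwise distinct labels, and two vertices of the same clique must in addition receive labels at least $p$ apart. Because each vertex of $Y$ is a common neighbour, every clique vertex is at distance $1$ from $Y$, so restricting to labels in $L$ already discharges every distance-$1$ and distance-$2$ constraint that involves $Y$. Thus it suffices to assign distinct labels from $L$ to the $N=\sum_j|C_j|$ clique vertices so that, within each clique, the assigned labels are pairwise at least $p$ apart.

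Next I would pass from label values to sorted indices, which is the key simplification. Writing $L=\{\ell_1<\ell_2<\cdots<\ell_m\}$ with $m=|L|\ge N$, observe that if two labels occupy sorted indices $a<b$ with $b-a\ge p$, then $\ell_b-\ell_a\ge b-a\ge p$, because the $\ell_i$ are strictly increasing integers. Consequently it is enough to place the clique vertices at distinct indices in $\{1,\dots,N\}$ so that two vertices of the same clique always occupy indices at least $p$ apart; passing back through $\ell_{(\cdot)}$ then gives a valid labeling. Equivalently, I must partition $\{1,\dots,N\}$ into blocks of sizes $|C_1|\ge|C_2|\ge\cdots\ge|C_h|$ with each block $p$-separated, i.e.\ arrange the $N$ vertices in a length-$N$ sequence with equal-clique vertices at least $p$ apart. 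This is the classical unit-time scheduling problem with cooldown $p-1$ and no idle slot, which is feasible if and only if $(|C_1|-1)p+f\le N$, where $f$ is the number of cliques of maximum size $|C_1|$; a witnessing sequence is produced by repeatedly placing a vertex of the clique with the most still-unplaced vertices among those not used in the previous $p-1$ positions. The final step is to derive this inequality from the hypothesis: using $N\ge f|C_1|$ (counting only the $f$ largest cliques) together with $N\ge p|C_1|$, one checks that if $f\le p$ then $(|C_1|-1)p+f\le p|C_1|\le N$, and if $f>p$ then $(|C_1|-1)p+f\le f|C_1|\le N$.

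I expect the arrangement step, not the reduction, to be the main obstacle. The naive round-robin that cycles once through every nonempty clique per round breaks down as soon as one clique is much larger than the rest: in the last rounds only that clique survives, and its copies get placed fewer than $p$ apart. So the argument genuinely needs the frequency-greedy scheduler together with the sharp threshold $(|C_1|-1)p+f\le N$, and the entire role of the hypothesis $\sum_j|C_j|\ge p|C_1|$ is to guarantee that threshold.
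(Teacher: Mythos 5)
Your proposal is correct, and its first half --- all clique vertices get pairwise distinct labels because everything is within distance $2$, labels at least $p$ apart inside each clique, constraints involving $Y$ discharged by the definition of $L$, and the passage to sorted indices so that an index gap of $p$ forces a label gap of $p$ --- is exactly the reduction the paper performs. Where you genuinely diverge is the arrangement step. The paper finishes with an explicit construction: it lists the sorted labels of $L$ with stride $p$, cycling through residue classes ($l_1, l_{p+1}, l_{2p+1}, \ldots$, then $l_2, l_{p+2}, \ldots$, and so on), and deals this sequence out to the cliques taken in descending order of size; the hypothesis $\sum_j |C_j| \ge p|C_1|$ is what keeps the largest cliques inside a single residue pass, so that labels assigned to one clique are always $p$ apart in index. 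You instead recast the arrangement as no-idle unit-task scheduling with cooldown $p-1$ and invoke the classical feasibility criterion $(|C_1|-1)p+f\le N$, deriving it from the hypothesis by a two-case analysis that checks out: $f\le p$ gives $(|C_1|-1)p+f\le p|C_1|\le N$, and $f>p$ gives $(|C_1|-1)p+f\le f|C_1|\le N$. Your route buys modularity and in fact a sharper lemma, since the threshold $(|C_1|-1)p+f\le N$ is strictly weaker than $N\ge p|C_1|$; your diagnosis that naive once-per-round round-robin fails is also correct. The trade-off is that the combinatorial core is outsourced: unless you may simply cite the scheduling theorem, you still owe the proof that the frequency-greedy scheduler never gets stuck under the threshold, and that is the nontrivial part of the whole argument. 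Notably, the standard constructive proof of that theorem --- fill a $p$-column array column-major with job blocks in descending size order and read it row-major --- is precisely the paper's stride construction, so making your proposal self-contained would largely reproduce the paper's proof (or require the greedy exchange/invariant argument in its place).
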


\begin{proof}
Let $n'=\sum_{j} |C_j|$ and $\omega=|C_1|$. The statement of the lemma is rewritten as ``if $|L|\ge n'$ and $n'\ge p \omega$, all the cliques can be properly labeled with $L$''. Let us assume $L=\{l_1,l_2,\ldots,l_{n'}\}$. Since we can use distinct labels for vertices in $C_1, C_2,\ldots,$ $C_h$, only the distance-1 condition inside of a same clique matters. 
If $n' \equiv 1$ (mod $p$), we label the vertices in $C_{1},  C_{2}, \ldots$, $C_{n'}$ in this order by using labels in order of $l_1, l_{p+1}, l_{2p+1}, \ldots, l_{n'}, l_2, l_{p+2},l_{2p+2} \ldots$, $l_{n'-p+2}, l_{3}\ldots, l_p, l_{2p}, \ldots, l_{n'-1}$. Note that the vertices in $C_{1}$ are  labeled by $l_1, l_{p+1},$ $\ldots,$ $l_{p(\omega-1)+1}$ (note that $p\omega\le n'$). Since the difference between $l_{\alpha p+i}$ and $l_{(\alpha +1)p+i}$ for each $i$ and $\alpha$ is at least $p$, the labeling for cliques does not violate the distance-1 condition. We can choose similar orderings for the other residuals. 
\qed
\end{proof}

Now we go back to the algorithm parameterized by $\tc+\omega$. 
Given a twin cover $X$, we say that a $k$-$L(p,1)$-labeling is \emph{good} for $X$ if it uses only labels in $\{0,1,\ldots, (2p-1)|X|-p\} \cup \{k -(2p-1)|X|+p, \ldots, k \}$ for $X$. 
The following lemma is also important. 
\begin{lemma}
\label{lem Xlabel}
Let $X$ be a twin cover in $G$ such that each $T_i$ satisfies $\omega_i \le 
n_i/p$. 
Then any $k$-$L(p,1)$-labeling $f$ of $G$ can be transformed into a good $k$-$L(p,1)$-labeling $f^*$.
\end{lemma}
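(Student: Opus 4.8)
The plan is to construct the good labeling $f^*$ in two stages: first move the labels of the twin cover $X$ down into the low band $B=\{0,1,\ldots,(2p-1)|X|-p\}$, and then re-derive the labels of every vertex of $V\setminus X$ type by type by invoking Lemma~\ref{lem:maxclique}. The point to keep in mind throughout is that the hypothesis $\omega_i\le n_i/p$, i.e.\ $n_i\ge p\,\omega_i$, is exactly the size condition $\sum_j|C_j|\ge p|C_1|$ demanded by Lemma~\ref{lem:maxclique} when it is applied to the cliques making up a single type $T_i$. Thus, once the labels on $X$ are pinned inside $B$, each type should be relabel-able from the labels that $X$ has vacated.

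Relocating $X$ is the easy half. Since there are only $|X|=\tc(G)$ vertices in the cover, I would simply give them the labels $0,p,2p,\ldots,(|X|-1)p$, which all lie in $B$ because $(|X|-1)p\le(2p-1)|X|-p$. These labels are pairwise at least $p$ apart, so they simultaneously satisfy the distance-$1$ condition for adjacent pairs inside $X$ and the distance-$2$ condition (distinctness) for the remaining pairs; the symmetric high band in the definition of ``good'' is then spare slack that I do not even need here. This step is a one-line packing argument.

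With $f^*$ frozen on $X$, fix a type $T_i$ and let $Y_i=N(u)\cap X$ be its common neighbourhood (the same for every $u\in T_i$). Every two vertices of $T_i$ are within distance $2$ of each other, and $T_i$ is a disjoint union of cliques each of whose vertices is adjacent to exactly $Y_i$ inside $X$; this is precisely the configuration of Lemma~\ref{lem:maxclique}, taking $Y=Y_i$ with its (now low) band labels and $L=L_i$ the set of labels at least $p$ away from every label on $Y_i$ and distinct from the labels of any cover vertex at distance $2$ from $T_i$. Since $Y_i\subseteq X$ occupies only small labels, $L_i$ is essentially the whole top part of $\{0,\ldots,k\}$; combined with $n_i\ge p\,\omega_i$, Lemma~\ref{lem:maxclique} then produces a valid labelling of all the cliques of $T_i$ drawn from $L_i$ alone.

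The hard part is making these per-type labelings globally consistent. If $u\in T_i$ and $v\in T_j$ with $i\neq j$ share a common neighbour in $X$, then $\dist_G(u,v)=2$, so the distance-$2$ condition forces $f^*(u)\neq f^*(v)$ even though $u$ and $v$ were labelled in separate applications of Lemma~\ref{lem:maxclique}. I would resolve this by relabelling the types in a fixed order and, when it is the turn of $T_i$, removing from $L_i$ every label already committed to a previously processed type that meets $Y_i$; the quantity to control is then $|L_i|$ after these deletions, and I would need to show it is still at least $n_i$. The inequalities this bookkeeping rests on are $|L_i|\ge n_i$ before the cross-type deletions — which should hold because the original labeling $f$ already packed the $n_i$ vertices of $T_i$ into labels avoiding $Y_i$'s forbidden zone — together with the claim that the labels evacuated from the middle by pushing $X$ into $B$ leave enough room to absorb the cross-type collisions. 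Pinning down this counting, rather than the (clean) applications of Lemma~\ref{lem:maxclique}, is where I expect the genuine difficulty to lie.
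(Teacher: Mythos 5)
There is a genuine gap, and it is exactly where you predicted it would be --- but it is not ``bookkeeping'' that can be pinned down within your framework; the framework itself breaks. Your plan discards the original labeling $f$ entirely and rebuilds the labels of $V\setminus X$ type by type, greedily deleting from $L_i$ the labels already committed to previously processed types that share a neighbor with $T_i$. This greedy scheme can fail even when $k$ is large enough for a labeling to exist: take types $T_1,T_2,T_3$ where $N(T_1)\cap N(T_2)=\emptyset$ but both meet $N(T_3)$. A valid $f$ may reuse the \emph{same} label set for $T_1$ and $T_2$ (they do not conflict), so that $k+1$ is only about $\max(n_1,n_2)+n_3$ plus the forbidden zones. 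Your greedy pass processes $T_1$ and $T_2$ independently from available sets with different forbidden zones, so nothing forces them to reuse labels; they may jointly burn up to $n_1+n_2$ labels, leaving fewer than $n_3$ for $T_3$. Deciding which \emph{non-conflicting} types must be coordinated to share labels is a global problem --- it is precisely the set-system/ILP structure the paper sets up later in the proof of Theorem~\ref{thm:L-21:twin} --- and it cannot be absorbed into this lemma by a local counting argument. (A secondary loose end: your inference that $|L_i|\ge n_i$ ``because $f$ already packed $T_i$ outside $Y_i$'s forbidden zone'' compares against $f$'s forbidden zone, not the one created after you move $X$ to $0,p,2p,\ldots$; that also needs an argument.)

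The paper's proof sidesteps all of this by never relabeling from scratch: it \emph{transforms} $f$, inheriting for free the global decision of which types share which labels. It repeatedly picks labels $a<b$ that are unused in $X$ and at least $p$ away from every label used in $X$, with some label of $X$ lying in $[a+p,b-p]$, and cyclically rotates the labels in $[a,b]$ ($a\to a+1,\ldots,b\to a$). This preserves the distance-$2$ condition everywhere and the distance-$1$ condition within $X$ and between $X$ and $V\setminus X$; the only possible damage is the distance-$1$ condition inside cliques of $V\setminus X$, which is repaired by permuting each type's \emph{own} (unchanged) label set via Lemma~\ref{lem:maxclique}, using $\omega_i\le n_i/p$ --- this is the one place your use of that hypothesis coincides with the paper's. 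Each rotation pushes a ``middle'' label of $X$ outward, and when no valid triple $(a,l,b)$ exists, all labels of $X$ already lie in the two end bands, i.e.\ the labeling is good.
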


\begin{proof}
Let $f$ be an $L(p,1)$-labeling, and $a,b\in \{0, 1, \ldots, k\}$ be two labels such that (1) they are not used in $X$, (2) they are at least $p$ apart from all the labels used in $X$, and (3) there is at least one label $l$ used in $X$ where $a+p\le l \le b-p$. 
Then we rotate labels between $a$ and $b$ in $f$ as follows: 
$a \rightarrow a+1$, 
$a+1 \rightarrow a+2$, $\ldots$, $b \rightarrow a$. 


Let $f'$ be a labeling obtained by the above relabeling.
The rotation does not affect the distance-2 condition, though it may affect distance-1 condition.  
As for $X$, we notice that only labels in $\{a+p, \ldots, b-p\}$ are changed in $X$, which does not yield any new conflict inside of $X$. Therefore, $f'$ satisfies the distance-1 condition of $L(p,1)$-labeling in $G[X]$. Also $b-p+1$ is only a label that could be newly used in $X$ of $f'$, which does not affect any label in $V\setminus X$;  $f'$ also satisfies the distance-1 condition of $L(p,1)$-labeling between $X$ and $V\setminus X$.

We see that $f'$  does not violate the condition of $L(p,1)$-labeling within $X$ and between $X$ and $V\setminus X$. 
On the other hand, it may violate the condition within $V\setminus X$. For example, if a clique in $G[V\setminus X]$ has two vertices labeled with $b-p+1$ and $b+1$ in $f$, they are labeled with $b-p+2$ and $b+1$ in $f'$, which violates the distance-1 condition by $(b+1)-(b-p+2)=p-1$. 
Fortunately, such a violation can be easily avoided by further relabeling vertices in $V\setminus X$ as follows.

For each $T_i$, we first observe that labels used in $f$ for $T_i$ are different from each other due to the distance 2-condition, as so in $f'$. 
A problem may occur inside of a clique, which may violate the distance-1 condition. 
However, even if a conflict occurs, we can obtain a proper $k$-$L(p,1)$-labeling by relabeling the vertices in $T_i$ with the same label set. This is because 
the cliques inside of $T_i$ have exactly same neighbors and $p\omega_i \le n_i$ holds, by which we can apply the argument of Lemma \ref{lem:maxclique}.

The above procedure can push up a label in a middle range used in $X$.  
It can be applied as long as a triplet of $a, b$ and $l$ exists. By the definition of $a$ and $b$, $l$ can exist only when $b-a\ge 2p$. For example, the triplet of $(a, l, b)$ is possible for $l=a+p$ and $b=a+2p$, but we cannot take $l$ for $b< a+2p$. Consider the labeling where all the vertices in $X$ are labeled by $|X|$ labels near $k$: $k -(2p-1)|X|+p, k -(2p-1)(|X|-1)+p, k -(2p-1)(|X|-2)+p,\ldots, k -(2p-1)+p, k -p+1$. It is easy to see that we cannot take $a$ and $b$ for the labeling, though we can take $a$ and $b$  if we use $k -(2p-1)|X|+p-1$ or a smaller label instead of $k -(2p-1)|X|+p$. 
On the other hand, consider the labeling where all the vertices in $X$ are labeled by $|X|$ labels near $0$: $p-1, 3p-2, \ldots, (p-1)+(2p-1)(|X|-2), p-1+(2p-1)(|X|-1)(=(2p-1)|X|-p)$. We cannot take $a$ and $b$ again. 

By these, if we cannot apply the above procedure, all the labels for $X$ are in $\{0,1,\ldots,(2p-1)|X|-p-1,(2p-1)|X|-p\}\cup \{k -(2p-1)|X|+p,k -(2p-1)|X|+p+1,\ldots,k-1, k\}$. 
Hence, by applying the above procedure repeatedly, we eventually obtain a good $k$-$L(2,1)$-labeling $f^*$. \qed

\end{proof}

From Lemma \ref{lem Xlabel}, we immediately obtain the following corollary.
\begin{corollary}\label{cor:good_labeling}
Let $X$ be a twin cover in $G$ such that each $T_i$ satisfies $\omega_i \le 
n_i/p$. 
If there is a $k$-$L(p,1)$-labeling in $G$, then
there is a good $k$-$L(p,1)$-labeling for $X$ in $G$.
\end{corollary}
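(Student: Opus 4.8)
The plan is to obtain the corollary as an immediate consequence of Lemma \ref{lem Xlabel}, since the corollary is simply the existential reformulation of that transformation statement; all of the real work has already been absorbed into the lemma.

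First I would assume that $G$ admits a $k$-$L(p,1)$-labeling and fix one such labeling $f$. The hypothesis placed on the twin cover $X$ is identical in the lemma and in the corollary---each type $T_i$ satisfies $\omega_i \le n_i/p$---so the precondition of Lemma \ref{lem Xlabel} is met verbatim and no additional case analysis is required. Applying the lemma to $f$ then yields a \emph{good} $k$-$L(p,1)$-labeling $f^*$ of $G$ for $X$.

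I would then simply observe that $f^*$ is the desired witness. By Lemma \ref{lem Xlabel}, $f^*$ is a valid $k$-$L(p,1)$-labeling whose restriction to $X$ uses only labels in $\{0, 1, \ldots, (2p-1)|X|-p\} \cup \{k-(2p-1)|X|+p, \ldots, k\}$, which is precisely the definition of good for $X$. Hence a good $k$-$L(p,1)$-labeling for $X$ exists in $G$, which is exactly the conclusion of the corollary.

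There is no genuine obstacle at this stage: the entire difficulty lies in the proof of Lemma \ref{lem Xlabel}, where one must verify that the iterated label-rotation, combined with re-packing the labels within each type $T_i$ via Lemma \ref{lem:maxclique}, never violates either the distance-$1$ or distance-$2$ condition. Once that lemma is in hand, the corollary follows in a single step, and the only point worth stating explicitly is that the hypotheses of the two statements coincide exactly, so the lemma applies without modification.
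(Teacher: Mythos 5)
Your proposal is correct and matches the paper exactly: the paper also derives this corollary as an immediate consequence of Lemma~\ref{lem Xlabel}, with no additional argument, since the hypotheses of the lemma and the corollary coincide and the lemma's output $f^*$ is precisely the required good $k$-$L(p,1)$-labeling.
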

Therefore, we consider to find only a good $L(p,1)$-labeling. By using the corollary, we can show that \textsc{$L(p,1)$-Labeling} is fixed-parameter tractable when parameterized by $\tc+\omega$.

\begin{theorem}\label{thm:L-21:twin}
\textsc{$L(p,1)$-Labeling} is fixed-parameter tractable when parameterized by $\tc+\omega$.
\end{theorem}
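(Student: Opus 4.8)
The plan is to reduce \textsc{$L(p,1)$-Labeling} parameterized by $\tc+\omega$ to an instance of \textsc{Integer Linear Programming Feasibility} with a number of variables bounded by a function of $\tc+\omega$, and then invoke Theorem~\ref{thm:ILP}. Throughout I fix a twin cover $X$ of minimum size (computable in FPT time by the definition of $\tc$), and I note that the labels assigned to $X$ essentially determine everything else, because every vertex of $V\setminus X$ interacts with the rest of the graph only through its type $T_i$ (its neighborhood in $X$) and through the clique of $G[V\setminus X]$ it belongs to. Since $\omega_i\le\omega$ for every type and there are at most $2^{|X|}$ possible neighborhoods in $X$, the number of distinct types $s$ is bounded by $2^{\tc}$, so both $\tc$ and $\omega$ control the combinatorial structure.

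First I would reduce to the good-labeling regime. By Corollary~\ref{cor:good_labeling}, if every type $T_i$ satisfies $\omega_i\le n_i/p$, then an optimal solution may be assumed to label $X$ using only the $2(2p-1)|X|-2p$ many labels near the two ends of $\{0,\dots,k\}$; I would brute-force over all assignments of labels to the (at most $\tc$) vertices of $X$ drawn from this restricted pool, which gives $(4(2p-1)\tc)^{\tc}$ candidates, a function of $\tc$ alone once $p$ is constant. The types with $\omega_i>n_i/p$ are the small, degenerate cliques; there are few vertices involved per such type relative to $p$, and I would argue these can be handled by bounded enumeration or folded into the guess on $X$ so that the corollary's hypothesis applies to the remaining instance. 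With the labels on $X$ fixed, each type $T_i$ comes with a fixed set $L_i$ of labels that are feasible for its vertices (those at distance at least $p$ from the labels appearing on $N(T_i)\subseteq X$), and the only remaining constraints are: the distance-$2$ condition forcing all vertices of $T_i$ to receive \emph{distinct} labels, and the distance-$1$ condition forcing labels within a common clique of $T_i$ to be $\ge p$ apart.

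The heart of the argument is then to show that, for each type, feasibility of labeling its cliques from $L_i$ is governed only by \emph{how many labels of each ``kind'' are available}, and Lemma~\ref{lem:maxclique} is exactly the tool that converts a counting condition into an actual labeling. Concretely, for a type $T_i$ the cliques have a common neighborhood, so by Lemma~\ref{lem:maxclique} a valid assignment exists whenever the available labels can be partitioned so that each clique of size $c$ receives $c$ labels that are pairwise $\ge p$ apart; the lemma guarantees this is possible as long as the total count dominates and the largest clique fits. I would encode these counting requirements as ILP constraints: the variables count, for each type and each residue class / block of labels, how many labels of that block are consumed, subject to not double-using a label across types that share vertices at distance $2$ (which happens only through common neighbors in $X$, already fixed). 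Because the number of types is at most $2^{\tc}$ and the number of distinct label-blocks relevant per type is bounded by $p\cdot\omega$, the total number of variables is bounded by a function of $\tc+\omega$, so Theorem~\ref{thm:ILP} yields an FPT algorithm, with the outer brute-force over $X$'s labels and over the optimal span $k$ (binary-searched in $\{0,\dots,p\Delta^2\}$, or more carefully bounded) contributing only an extra FPT factor.

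The main obstacle I anticipate is the bookkeeping that makes the ILP correct rather than merely plausible: I must ensure that the distance-$2$ constraints \emph{between different types} and between types and $X$ are faithfully captured, since two vertices in distinct types can still be at distance $2$ via a shared neighbor in $X$, forcing their labels to differ even though they sit in unrelated cliques. Handling this requires the variables to track not just multiplicities but enough information to certify global distinctness where neighborhoods overlap; the delicate point is to verify that Lemma~\ref{lem:maxclique}'s hypothesis $\sum_j|C_j|\ge p|C_1|$ (equivalently $\omega_i\le n_i/p$) continues to hold after the available label set $L_i$ has been shrunk by the guesses on $X$, and to confirm that the counting formulation has an integral feasible solution exactly when a genuine good labeling exists. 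I expect this equivalence between the ILP's feasibility and the existence of a good $k$-$L(p,1)$-labeling to be the step demanding the most care.
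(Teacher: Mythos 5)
Your overall strategy matches the paper's: guess a good labeling of the twin cover (via Corollary~\ref{cor:good_labeling}), absorb the types with $\omega_i > n_i/p$ into the cover (the paper forms $X' = X \cup \bigcup_{i:\omega_i > n_i/p} T_i$, of size at most $\tc + 2^{\tc}\cdot p\cdot \omega$ --- this is exactly where $\omega$ enters the parameter, confirming your instinct that these types can be ``folded in''), and then solve an ILP on label counts whose realizability rests on Lemma~\ref{lem:maxclique}. However, the step you yourself flag as ``demanding the most care'' --- expressing the cross-type distance-$2$ constraints in a counting ILP --- is precisely the idea your proposal is missing, and it cannot be patched within your formulation. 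You propose variables counting, per type and per block of labels, how many labels that type consumes, ``subject to not double-using a label across types that share vertices at distance $2$.'' That disjointness side-condition is not a function of the counts: knowing how many labels each type draws from a common pool does not determine whether conflicting types can be given pairwise disjoint label sets. Concretely, take five types whose neighborhoods in $X$ intersect in a $5$-cycle pattern ($N(T_j)\cap N(T_{j+1})\neq\emptyset$ cyclically, all other pairs disjoint), each demanding one label from a pool of two available middle labels. Every counting constraint indexed by types or pairs of types (each demand at most the pool size, each conflicting pair's sum at most the pool size) is satisfied, yet no valid assignment exists, since multicoloring an odd cycle with unit demands needs three labels. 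So an ILP whose variables are indexed by types is unsound, not merely delicate.

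The paper's fix is to index the variables differently: it enumerates all set systems $\mathcal{C}_0, \mathcal{C}_1, \ldots, \mathcal{C}_t$ of types whose neighborhoods in $X$ are pairwise disjoint, and introduces one variable $x_i$ per system, counting how many labels are allocated to a pool $L_i$ shared by exactly the types in $\mathcal{C}_i$. A label in $L_i$ is used once in every $T_j \in \mathcal{C}_i$, which is safe because those types have pairwise disjoint neighborhoods (so no two of their vertices are at distance $2$), and the constraint $\sum_{i: T_j \in \mathcal{C}_i} x_i = |T_j|$ guarantees every vertex of $T_j$ receives a label. Since $s \le 2^{\tc}$ and hence $t \le 2^{2^{\tc}}$, the number of variables is still a function of $\tc$ alone, so Theorem~\ref{thm:ILP} applies; realizability of the counts as an actual labeling then follows from Lemma~\ref{lem:maxclique} exactly as you intend, because after passing to $X'$ every remaining type satisfies $n_i \ge p\omega_i$. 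This ``one variable per sharing pattern, not per type'' device is the missing ingredient. (A smaller omission: the paper also needs a separate brute-force case when $k < 8p\tc'$, since the good-labeling argument requires the two end intervals of $\{0,\ldots,k\}$ to be far apart; your write-up skips this, but it is routine compared to the gap above.)
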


\begin{proof}
We present an algorithm to solve $k$-\textsc{$L(p,1)$-Labeling} instead of \textsc{$L(p,1)$-Labeling}.  
We first compute a minimum twin cover $X$ in time $O(1.2738^{\tc} + \tc n + m)$ \cite{Ganian2015}.
For twin cover $X$, we define $T_i$'s. Then, we define another twin cover of 
$X'=X \cup \bigcup_{i:\omega_i > {n_i}/{p}}T_i$. Since $X$ is a twin cover, 
$X'$ is also. The size of $X'$ is bounded by $\tc + 2^{\tc}\cdot p\cdot \omega$, 
because the number of types is at most $2^{\tc}$ and the size of $T_i$ joining $X$  is at most $p\cdot \omega$, where $\omega$ is the maximum clique size.   
Let $\tc'=|X'|$.


\medskip

We are now ready to present the core of the algorithm. We classify an instance into two cases. If $k$ is small enough, we can apply a brute-force type algorithm. Otherwise, we try to find a good $k$-$L(p,1)$-labeling. 

\medskip

\noindent{\bf Case 1. $k < 8p\tc' $}

For each type $T_i$, the distance between two vertices in $T_i$ is at most $2$.
Thus, the labels of vertices in $T_i$ must be different each other.
Due to $k < 8p\tc' $, if $|T_i|\ge 8p\tc'$, we immediately conclude that the input is a no-instance.
Otherwise, $n = |X'|+\sum |T_i| \le \tc' + 8p\tc' 2^{\tc}$ holds, 
because the number of $T_i$'s is at most $2^{\tc}$. 
Thus we check all the possible labelings in time $O((8p\tc')^{8p\tc' 2^{\tc}})$. 

\medskip

\noindent{\bf Case 2. $k \ge 8p\tc'$}

Let $\mathcal{C}_{0}, \mathcal{C}_{1}, \ldots \mathcal{C}_{t}$ be the family of all possible set systems on $\{T_1, \ldots, T_s\}$ such that whenever $T_j$ and $T_{j'}$ are distinct elements of a system $\mathcal{C}_i$ then $N(T_j)\cap N(T_{j'})=\emptyset$.
We define $\mathcal{C}_0$ as an empty set. 
These are introduced to describe a set of $T_j$'s that can use a same label. 
For each $\mathcal{C}_i$, we prepare a set $L_{i}$ of labels, which will be used 
during the execution of the algorithm to represent the set of labels that 
could be used for vertices in $T_j \in \mathcal{C}_i$. Note that  
$L_{0}, L_{ 1}, \ldots ,L_{t}$ must be disjoint each other, 
and a label in $L_i$ is used exactly once per $T_j$. 
We also define $L_0$ as the set of labels that are not used in $V\setminus X'$.
Note that each $L_i$ can be empty. 

By Corollary \ref{cor:good_labeling}, there is a good $k$-$L(2,1)$-labeling for $X$ such that vertices in $X$ only use labels in $\{0,1,\ldots, 2p(\tc'-1) -p \} \cup \{k -2p(\tc'-1) +p, \ldots, k\}$ if the input is an yes-instance.
Thus we try all the possible partial labelings for $X$, each of which uses only labels in $\{0,1,\ldots, 2p(\tc'-1) -p \} \cup \{k -2p(\tc'-1) +p, \ldots, k\}$.
Since the number of labels is $2(2p(\tc'-1)-p+1)\le 4p\tc'$, 
there are at most $(4p\tc' )^{\tc'}$ possible labelings of $X$. 
For each of them we further try all the possible placement of labels in $\{0,1,\ldots, 2p(\tc'-1) -1\} \cup \{k -2p(\tc'-1) +1, \ldots, k\}$ into $L_{0}, L_{1}, \ldots, L_{t}$, which is a little wider than above. 
The number of possible placements is at most $t^{4p\tc'}$ due to the disjointness of $L_i$'s.
Therefore, the total possible nonisomorphic partial labelings is bounded by $(4p\tc' )^{\tc}\cdot t^{4p\tc'}$. It should be noted that no vertex will be labeled by a label in $\{0,1,\ldots, 2p(\tc'-1) -1\} \cup \{k -2p(\tc'-1) +1, \ldots, k\}$ hereafter.
Thus we consider how we use labels in $\{2p(\tc'-1), \ldots, k -2p(\tc'-1), \ldots, k\}$ for $V\setminus X$, which does not yield any conflict with $X$.  

We then formulate as Integer Linear Programming how many labels should be placed in $L_{0}, L_{1}, \ldots , L_{t}$ for one partial labeling using $\{0,1,\ldots, 2p(\tc'-1) -p\} \cup \{k -2p(\tc'-1) +p, \ldots, k\}$.
 For a fixed partial labeling, 
let $a_i$ be the number of labels that have been already assigned to $L_i$ there, and $x_i$ be a variable representing the number of labels used in $L_i$ in the desired labeling.

The following is the ILP formulation. 
\begin{align*}
  \begin{cases}
    x_0 + \cdots + x_t \le k + 1 & \\
    x_i \ge a_i,  & {\rm for}\ i \in \{0,\ldots, t\} \\
    \sum_{i:T_{j}\in \mathcal{C}_i}x_i = |T_j|,  & {\rm for}\ j \in \{1, \ldots, s\} \\
  \end{cases}
\end{align*}
The first constraint shows that the total number of labels is at most $k+1$. Note that the number of unused labels is $x_0$. The second one is for consistency to the partial labeling. The last one, which is the most important, guarantees that every vertex in $T_j$ can receive a label; the number of usable labels is $|\{i \mid T_j \in \mathcal{C}_i\}|$, because a label in $L_i$ is used exactly once per $T_j$. 

If the above ILP has a feasible solution, it is possible to assign labels to all the vertices in $V\setminus X$ if we ignore the distance-1 condition inside of each clique.   
Actually, we can see that the information is sufficient to give a proper $k$-$L(p,1)$-labeling. At the beginning of the algorithm, we take twin cover $X'$, which means that for every $T_i \subseteq V\setminus X$, $n_i \ge p \omega_i$ holds. Since cliques in $G[T_i]$ have common neighbors and $n_i \ge p \omega_i$, only the number of available labels matters by Lemma \ref{lem:maxclique}. 
Since the existence of an ILP solution guarantees this, we can decide whether a partial labeling can be extended to a proper $k$-$L(p,1)$-Labeling, or not.  

Because $s\le 2^{\tc}$ and $t\le 2^{2^{\tc}}$, the number of variables of the above ILP is at most
$2^{2^{\tc}}$; it can be solved in FPT time with respect to ${\tc}$ by Theorem \ref{thm:ILP}~\cite{Lenstra1983,Frank1987,Kannan1987}. 
Since $\tc'\le \tc + 2^{\tc}\cdot p \cdot \omega$, the total running time is FPT time 
with respect to $\tc + \omega$. \qed




\end{proof}



\color{black}

\subsection{\textsc{$L(1,1)$-Labeling} parameterized by twin cover number}
Unlike $L(p,1)$-labeling with $p\ge 2$, the distance-1 condition of $L(1,1)$-labeling requires just that the labels between adjacent vertices are different. Thus, \textsc{$L(1,1)$-Labeling} seems to be easier than \textsc{$L(p,1)$-Labeling} with $p\ge 2$.
Actually, we can show that \textsc{$L(1,1)$-Labeling} is fixed-parameter tractable parameterized only by twin cover number.

\begin{lemma}\label{lem:twin:label}
For a graph $G$, let $u$ and $v$ be twins with edge $\{u,v\}\in E(G)$.  Let $G'$ be the  graph of $G'=(V,E')$, where $E'=E(G) \setminus \{\{u,v\}\}$.  
Then any $L(1,1)$-labeling on $G'$ is also an $L(1,1)$-labeling on $G$ and verse visa. 
\end{lemma}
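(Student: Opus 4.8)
The plan is to show that removing the twin edge $\{u,v\}$ changes neither the distance-$1$ nor the distance-$2$ constraints that an $L(1,1)$-labeling must respect, since both reduce to ``different labels on a certain set of pairs'' when $p=q=1$. Concretely, an $L(1,1)$-labeling of a graph $H$ is exactly an assignment $f$ such that $f(x)\neq f(y)$ whenever $\dist_H(x,y)\in\{1,2\}$, i.e. whenever $x$ and $y$ are distinct and share a common neighbor or are adjacent; equivalently $f(x)\neq f(y)$ whenever $x,y$ lie together in some $N_H[w]$. So the entire content of the lemma is that the collection of closed neighborhoods $\{N_H[w]\}_{w\in V}$ is the same whether $H=G$ or $H=G'$. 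I would therefore reduce the statement to the purely combinatorial claim that $N_G[w]=N_{G'}[w]$ for every $w\in V$, or at least that the induced set of ``must-differ'' pairs is unchanged.

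First I would record the structural fact about true twins that the paper has already set up: since $u$ and $v$ are twins with $\{u,v\}\in E(G)$, they are true twins, so $N_G[u]=N_G[v]$ (they have the same neighbors and each is adjacent to the other). Next I would analyze how deleting $\{u,v\}$ affects each closed neighborhood. For any vertex $w\notin\{u,v\}$, the edge $\{u,v\}$ is not incident to $w$, so $N_{G'}[w]=N_G[w]$; these neighborhoods are untouched. The only neighborhoods that could change are $N[u]$ and $N[v]$. Here the true-twin property does the work: deleting $\{u,v\}$ removes $v$ from $u$'s open neighborhood (and $u$ from $v$'s), but $u\in N_{G'}[u]$ and $v\in N_{G'}[v]$ still hold as the vertices themselves, and crucially $u$ and $v$ remain at distance at most $2$ in $G'$ because they still share all their other common neighbors. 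So even though $u,v$ are no longer adjacent in $G'$, they are still forced to receive different labels by the distance-$2$ condition — provided they have at least one common neighbor, or alternatively one must argue the degenerate case directly.

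The key step, and the one I expect to be the only real obstacle, is precisely this potential degenerate case: if $u$ and $v$ have no other neighbor at all, then in $G'$ they become isolated from each other, $\dist_{G'}(u,v)=\infty$, and the $L(1,1)$-labeling of $G'$ would permit $f(u)=f(v)$, which is illegal in $G$ since $\{u,v\}\in E(G)$. I would handle this by invoking the standing assumption that $G$ is connected (stated in the Preliminaries): if $G$ has at least three vertices, connectivity forces $u$ or $v$ to have a further neighbor $w$, which is then a common neighbor of both by the twin property, so $\dist_{G'}(u,v)=2$ and the distance-$2$ constraint $f(u)\neq f(v)$ survives the edge deletion; the trivial case $V=\{u,v\}$ can be dismissed separately.

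With the degenerate case dispatched, the proof closes symmetrically in both directions. For the forward direction, any $L(1,1)$-labeling $f$ of $G$ is also one of $G'$: deleting an edge only removes constraints, and $G'$ is a subgraph of $G$ on the same vertex set, so every distance-$1$ or distance-$2$ pair of $G'$ is also such a pair of $G$ and is already separated by $f$. For the reverse direction, any $L(1,1)$-labeling $f$ of $G'$ satisfies all the constraints of $G$: the only constraint present in $G$ but seemingly absent in $G'$ is $f(u)\neq f(v)$ coming from the edge $\{u,v\}$, and by the argument above this is still enforced in $G'$ as a distance-$2$ constraint. Hence the two labeling sets coincide, proving the lemma.
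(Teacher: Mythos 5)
Your proof takes essentially the same approach as the paper: both reduce the lemma to showing that the set of pairs at distance at most $2$ is unchanged when the twin edge is deleted, with the key step being that $u$ and $v$ still share a common neighbor in $G'$ and hence remain at distance $2$. You are in fact more careful than the paper, whose proof silently asserts that twins have a common neighbor; the degenerate case $V=\{u,v\}$ (i.e., $G=K_2$) that you isolate is real, and there the lemma as stated actually fails ($G'$ is two isolated vertices, so $f(u)=f(v)=0$ is a valid $L(1,1)$-labeling of $G'$ but not of $G$), so it cannot merely be ``dismissed separately'' --- it must be excluded from the hypothesis, which is harmless for the paper's application but is a genuine (if trivial) defect of the lemma that your analysis exposes.
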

\begin{proof}
The statement is true, if $N_{G}^{\le 2}[y]=N_{G'}^{\le 2}[y]$ holds for any vertex $y \in V$, and we show this here. Since $N_{G}^{\le 2}[y] \supseteq N_{G'}^{\le 2}[y]$ is obvious, we show $N_{G}^{\le 2}[y] \subseteq N_{G'}^{\le 2}[y]$, that is, for any $y\in V$, if $w\in N_{G}^{\le 2}[y]$, $w$ also belongs to $N_{G'}^{\le 2}[y]$. Note that $w\in N_{G}^{\le 2}[y]$ means there is a path with length at most $2$ between $y$ and $w$. If $G$ has such a path between $y$ and $w$ not containing $\{u,v\}$, $G'$ also does. Thus, $w\in N_{G'}^{\le 2}[y]$. Otherwise, every path with length at most $2$ between $y$ and $w$ in $G$ contains 
$u$ and $v$, which implies that either $y$ or $w$ is $u$ or $w$. We just see the case when $y=u$ for symmetry, and take such a path between $y(=u)$ and $w$.  
If the path length is 1 (that is, $w=v$) in $G$, $y(=u)$ and $w(=v)$ has a common neighbor because $u$ and $v$ are twins, which implies that $w$ and $y$ are within distance $2$ in $G'$. If the path length is $2$, the path forms $(y,v,w)$. Namely, $w$ is a neighbor of $v$ and also of $u(=y)$ in $G'$. This completes the proof. \qed


\end{proof}

By Lemma \ref{cor:twin:label}, 
we immediately obtain the following corollary.
\begin{corollary}\label{cor:twin:label}
A minimum $L(1,1)$-labeling in $G'$ is a minimum $L(1,1)$-labeling in $G$.
\end{corollary}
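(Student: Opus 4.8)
The plan is to derive the corollary directly from Lemma~\ref{lem:twin:label}, which already does all the substantive work by showing that $G$ and $G'$ admit exactly the same set of valid $L(1,1)$-labelings. The only additional observation needed is that the quantity being minimized is insensitive to the deletion of the twin edge.

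First I would make precise what ``minimum $L(1,1)$-labeling'' means: it is an $L(1,1)$-labeling $f$ whose span $\max_{v\in V} f(v) - \min_{v\in V} f(v) + 1$ is as small as possible, and this minimum span equals $\lambda_{1,1}$. The key point is that the span is a function of the assignment $f$ and the vertex set $V$ alone; it makes no reference to the edge set. Since $G$ and $G'$ share the same vertex set $V$, the span of any fixed labeling $f$ is identical whether $f$ is regarded as a labeling of $G$ or of $G'$.

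Next I would invoke Lemma~\ref{lem:twin:label} to equate the feasible regions: a labeling $f$ is a valid $L(1,1)$-labeling of $G$ if and only if it is a valid $L(1,1)$-labeling of $G'$. Combined with the previous observation, this shows that the two minimization problems, ``minimize the span over all $L(1,1)$-labelings of $G$'' and the analogous problem for $G'$, have identical feasible sets and identical objective functions. Consequently they share the same optimal value, $\lambda_{1,1}(G) = \lambda_{1,1}(G')$, and precisely the same set of optimizers. In particular, any labeling attaining the minimum span on $G'$ attains it on $G$ as well, which is exactly the assertion of the corollary.

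There is essentially no obstacle here: all the combinatorial content---the claim that deleting a twin edge preserves the distance-at-most-$2$ neighborhoods $N^{\le 2}[\cdot]$, and hence the entire family of labeling constraints---is discharged inside Lemma~\ref{lem:twin:label}. The corollary is a formal consequence, and the only subtlety worth flagging is the implicit fact that the objective function does not reference the edge set, so that coinciding feasible regions force coinciding optima.
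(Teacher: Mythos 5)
Your proof is correct and follows the same route as the paper: the corollary is stated there as an immediate consequence of Lemma~\ref{lem:twin:label}, and your argument simply makes explicit the (valid) observations that the lemma equates the feasible sets of labelings while the span objective depends only on the vertex set. Nothing is missing.
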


Let $X$ be a twin cover again, and then each connected component in $G[V\setminus X]$ forms a clique, each of the edges in which are twin edges. Lemma \ref{cor:twin:label} implies that graph $G'$ obtained by removing all the edges in $G[V\setminus X]$ has the same $L(1,1)$-labeling number of $G$.   
The above deletion shows that $X$ is also a vertex cover of $G'$.  
Since \textsc{$L(1,1)$-Labeling} is fixed-parameter tractable when parameterized by vertex cover number \cite{FGK2011}, 
it is also fixed-parameter tractable when parameterized by twin cover number.

\begin{theorem}\label{thm:L-11:twin}
\textsc{$L(1,1)$-Labeling} is fixed-parameter tractable when parameterized by twin cover number.
\end{theorem}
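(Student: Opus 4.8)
The plan is to reduce \textsc{$L(1,1)$-Labeling} parameterized by twin cover number to \textsc{$L(1,1)$-Labeling} parameterized by vertex cover number, which is already known to be FPT~\cite{FGK2011}. First I would compute a minimum twin cover $X$ of $G$ in FPT time using the algorithm of~\cite{Ganian2015}, so that $|X|=\tc(G)$. By the structural property of twin covers, $G[V\setminus X]$ is a disjoint union of cliques and every edge inside such a clique joins two true twins. The idea is to delete all of these intra-clique edges, obtaining a graph $G'$ in which $X$ becomes a genuine vertex cover, apply the known FPT algorithm to $G'$, and argue that the labeling number is unchanged.

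Concretely, let $G'$ be obtained from $G$ by deleting every edge with both endpoints in $V\setminus X$. Since the only such edges are the twin edges of the cliques of $G[V\setminus X]$, in $G'$ every remaining edge has an endpoint in $X$; hence $X$ is a vertex cover of $G'$ and $\vc(G')\le|X|=\tc(G)$. I would then run the vertex-cover FPT algorithm of~\cite{FGK2011} on $G'$: as its running time is a function of $\vc(G')\le\tc(G)$ times a polynomial, the whole procedure is FPT in $\tc(G)$. It remains to certify that this computes the right value, namely that $\lambda_{1,1}(G')=\lambda_{1,1}(G)$.

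For that equality I would invoke Corollary \ref{cor:twin:label}, but the cleanest route is to verify directly that deleting all intra-clique edges at once preserves every $2$-neighborhood, i.e. $N_G^{\le 2}[y]=N_{G'}^{\le 2}[y]$ for all $y$; since the distance-$1$ and distance-$2$ constraints of an $L(1,1)$-labeling depend only on these $2$-neighborhoods, the two graphs impose identical constraints and so share the same labeling number. The inclusion $N_{G'}^{\le 2}[y]\subseteq N_G^{\le 2}[y]$ is immediate as $G'$ is a subgraph of $G$. For the reverse direction, the only pairs whose short connecting path could use a deleted edge are (i) two vertices $a,b$ of the same clique $Z\subseteq V\setminus X$, and (ii) a vertex $a\in Z$ with some $b\notin Z$ reached via $a$--$c$--$b$ where $\{a,c\}$ is intra-clique. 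In case (ii), $c\in V\setminus X$ and $b\notin Z$ force $b\in X$, and since $a$ and $c$ are true twins they share the neighbor $b$, so the retained edge $\{a,b\}$ already puts them at distance $1$ in $G'$. In case (i), every vertex of $Z$ shares a common neighbor in $X$, so $a$ and $b$ remain at distance $2$ in $G'$.

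The main obstacle is precisely this last point, which hides the only genuinely delicate situation: a clique $Z$ with no neighbor in $X$. Deleting its internal edges would turn its vertices into isolated vertices and strictly lower the labeling number, breaking both the argument and the tacit assumption in Lemma \ref{lem:twin:label} that the twins $u,v$ have a common neighbor. I would dispose of this using the connectivity of $G$: a clique $Z\subseteq V\setminus X$ with no neighbor in $X$ would be a separate component, forcing $G=Z$ and $\tc(G)=0$; this degenerate case (a single clique $K_n$, with $\lambda_{1,1}=n-1$) is handled trivially and separately. With it excluded, every clique in $G[V\setminus X]$ has a common neighbor in $X$, the $2$-neighborhood argument goes through, and the reduction to vertex cover completes the proof.
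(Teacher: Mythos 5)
Your proposal is correct and follows the same overall strategy as the paper: delete every edge inside the cliques of $G[V\setminus X]$, observe that $X$ becomes a vertex cover of the resulting graph $G'$, and invoke the vertex-cover FPT algorithm of \cite{FGK2011}, the crux being that $\lambda_{1,1}(G')=\lambda_{1,1}(G)$. Where you differ is in how that equality is established. The paper proves a single-edge statement (Lemma \ref{lem:twin:label}: deleting one twin edge preserves all $2$-neighborhoods) and then appeals to it to remove all intra-clique edges, while you verify $N_G^{\le 2}[y]=N_{G'}^{\le 2}[y]$ for the batch deletion in one shot. Your variant is actually the more robust of the two: once one edge of a clique $Z$ of size at least $3$ has been deleted, the remaining edges of $Z$ are no longer twin edges in the intermediate graph (the endpoints' neighborhoods now differ by the third vertex), so the paper's lemma cannot literally be re-applied edge by edge, whereas your all-at-once case analysis --- same-clique pairs rescued by a common neighbor in $X$, cross pairs rescued by the true-twin property --- never needs intermediate graphs. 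You also make explicit a degenerate case the paper leaves tacit: a clique of $G[V\setminus X]$ with no neighbor in $X$ (equivalently, twins whose only neighbors are each other), where deleting edges genuinely lowers $\lambda_{1,1}$; your appeal to the connectivity of $G$ (the paper's standing assumption) to reduce this to the case where $G$ is a single clique, handled separately, closes that hole. Both routes yield the same theorem with the same reduction; yours costs one extra case distinction and in exchange is airtight.
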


In \cite{Georges1995GeneralizedVL}, it is shown that for $G$ and a positive integer $c$, $\lambda_{cp,cq}(G)=c\lambda_{p,q}(G)$ holds. Thus we have  $\lambda_{1,1}(G)\le \lambda_{p,1}(G)\le \lambda_{p,p}(G)=p\lambda_{1,1}(G)$, which gives an approximation for \textsc{$L(p,1)$-Labeling}. In fact, by replacing the labels of an optimal $L(1,1)$-labeling of $G$ with multiples of $p$, we obtain an $L(p,1)$-labeling whose factor is at most $p$.      

\begin{corollary}\label{cor:tc-approx}
For \textsc{$L(p,1)$-Labeling}, there is a fixed-parameter $p$-approximation algorithm with respect to twin cover number.  
\end{corollary}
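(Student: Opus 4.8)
The plan is to leverage the fixed-parameter tractability of \textsc{$L(1,1)$-Labeling} established in Theorem~\ref{thm:L-11:twin} together with a simple label-scaling argument. First I would run the FPT algorithm of Theorem~\ref{thm:L-11:twin} to compute an \emph{optimal} $L(1,1)$-labeling $f$ of $G$; this uses labels in $\{0,1,\ldots,\lambda_{1,1}(G)\}$ and already runs in FPT time with respect to $\tc$, so any subsequent polynomial-time post-processing keeps the overall running time FPT.

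Next I would define $g(v)=p\cdot f(v)$ for every vertex $v$ and verify that $g$ is a valid $L(p,1)$-labeling. For the distance-$1$ condition, if $u$ and $v$ are adjacent then $f(u)\neq f(v)$, hence $|f(u)-f(v)|\ge 1$ and $|g(u)-g(v)|=p|f(u)-f(v)|\ge p$. For the distance-$2$ condition, if $u$ and $v$ are at distance $2$ then $f(u)\neq f(v)$, so $|g(u)-g(v)|=p|f(u)-f(v)|\ge p\ge 1$, which is even stronger than required. Thus $g$ is a $(p\,\lambda_{1,1}(G))$-$L(p,1)$-labeling.

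To bound the approximation factor I would use the sandwich inequality $\lambda_{1,1}(G)\le\lambda_{p,1}(G)\le p\,\lambda_{1,1}(G)$. The lower bound holds because every $L(p,1)$-labeling is in particular an $L(1,1)$-labeling: adjacent vertices differ by at least $p\ge 1$ and distance-$2$ vertices differ, so the set of $L(p,1)$-labelings is contained in the set of $L(1,1)$-labelings and the minimum span can only increase. The upper bound is realized by $g$ itself, or equivalently follows from $\lambda_{p,1}(G)\le\lambda_{p,p}(G)=p\,\lambda_{1,1}(G)$ via the identity $\lambda_{cp,cq}(G)=c\,\lambda_{p,q}(G)$ of \cite{Georges1995GeneralizedVL}. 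Combining the two bounds, the labeling $g$ attains labeling number $p\,\lambda_{1,1}(G)\le p\,\lambda_{p,1}(G)$, i.e., it is within a factor $p$ of optimal.

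There is no serious obstacle here: the only nontrivial ingredient is Theorem~\ref{thm:L-11:twin}, which we may assume, and the remainder is an elementary scaling argument computable in linear time. The step deserving the most care is making sure the strict-inequality (distance-$1$) constraint of $L(1,1)$ is correctly converted into the required gap of $p$ after multiplication, and that the resulting quantity $p\,\lambda_{1,1}(G)$ is compared against $\lambda_{p,1}(G)$ rather than against $\lambda_{1,1}(G)$, so that the factor-$p$ guarantee is measured against the true optimum of the problem being approximated.
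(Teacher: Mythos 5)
Your proposal is correct and takes essentially the same route as the paper: compute an optimal $L(1,1)$-labeling via Theorem~\ref{thm:L-11:twin}, replace every label by its multiple of $p$, and bound the resulting span $p\,\lambda_{1,1}(G)$ against the true optimum using $\lambda_{1,1}(G)\le\lambda_{p,1}(G)\le\lambda_{p,p}(G)=p\,\lambda_{1,1}(G)$, exactly the chain the paper derives from the identity $\lambda_{cp,cq}(G)=c\,\lambda_{p,q}(G)$ of \cite{Georges1995GeneralizedVL}. Your explicit verification of the scaled labeling and of the lower bound $\lambda_{1,1}\le\lambda_{p,1}$ merely spells out details the paper leaves implicit.
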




\section{Concluding Remarks}\label{sec:conclusion}
In this paper, we studied the parameterized complexity of \textsc{$L(p,1)$-Labeling}. 
The parameterization is mainly by combination of two parameters, because the problem 
is known to be NP-hard even on graphs of tree-width 2. We show that it is FPT 
when parameterized by clique-width plus maximum degree and twin cover number plus maximum clique size. The former result implies \textsc{$L(p,1)$-Labeling} is FPT when parameterized by bandwidth, and the latter strengthens the fact that \textsc{$L(p,1)$-Labeling} is FPT when parameterized by vertex cover number~\cite{FGK2011}.
For \textsc{$L(1,1)$-Labeling}, we further prove that it is FPT with respect to only twin cover number.

Some FPT results hold for more general settings, that is, \textsc{$L(p,q)$-Labeling} with any constant $p$ and $q$. For example, \textsc{$L(p,q)$-Labeling} with any constant $p$ and $q$ is 
FPT when parameterized by clique-width plus maximum degree, or tree-width plus maximum degree. This implies that bounding maximum degree is essential for NP-hardness, because \textsc{$L(p,q)$-Labeling} for trees 
(i.e., graphs with tree-width 1) is NP-hard for every pair of $p$ and $q$ having no common divisor~\cite{FGK2008}. 

An interesting open question is whether \textsc{$L(p,1)$-Labeling}  parameterized by only twin cover number is FPT or not.

\subsection*{Acknowledgements}
We are grateful to Dr. Yota Otachi for his insightful comments. 

%
%
%

\end{document}